\newcommand{\sbinom}[2]{\left[ \begin{array}{c} #1 \\ #2 \end{array} \right] }
\newcommand{\field}[1]{\mathbb{#1}}
\newcommand{\sP}{\field{P}}
\newcommand{\sG}{\field{G}}
\DeclareMathAlphabet{\mathbfsl}{OT1}{cmr}{bx}{it}
\newcommand{\uuu}{\kern-1pt\mathbfsl{u}\kern-0.5pt}
\newcommand{\vvv}{\kern-1pt\mathbfsl{v}\kern-0.5pt}
\newcommand{\omegaR}{s}
\newcommand{\myboxplus}{\kern1pt\mbox{\small$\boxplus$}}
\makeatletter \DeclareRobustCommand{\sbinom}{\genfrac[]\z@{}}
\newcommand{\G}[2]{\sbinom{{#1}\kern-1pt}{{#2}\kern-1pt}}
\newcommand{\Gq}[2]{\sbinom{{#1}\kern-0.25pt}{{#2}\kern-0.25pt}}
\newcommand{\Ps}{\smash{{\sP\kern-2.0pt}_q\kern-0.5pt(n)}}
\newcommand{\sPs}{\smash{{\sP\kern-1.5pt}_q(n)}}
\newcommand{\Ptwo}{\smash{{\sP\kern-2.0pt}_2\kern-0.5pt(n)}}
\newcommand{\Ptwom}{\smash{{\sP\kern-2.0pt}_2\kern-0.5pt(m)}}
\newcommand{\Ptwonm}{\smash{{\sP\kern-2.0pt}_2\kern-0.5pt(n+m)}}
\newcommand{\Ptwoa}{\smash{{\sP\kern-2.0pt}_2\kern-0.5pt(1)}}
\newcommand{\Ptwob}{\smash{{\sP\kern-2.0pt}_2\kern-0.5pt(2)}}
\newcommand{\Ptwoc}{\smash{{\sP\kern-2.0pt}_2\kern-0.5pt(3)}}
\newcommand{\Ptwod}{\smash{{\sP\kern-2.0pt}_2\kern-0.5pt(4)}}
\newcommand{\Ptwoe}{\smash{{\sP\kern-2.0pt}_2\kern-0.5pt(5)}}
\newcommand{\Ptwof}{\smash{{\sP\kern-2.0pt}_2\kern-0.5pt(6)}}
\newcommand{\Ptwokm}{\smash{{\sP\kern-2.0pt}_2\kern-0.5pt(2k-1)}}
\newcommand{\Pone}{\smash{{\sP\kern-2.5pt}_2\kern-0.5pt(n{-}1)}}
\newcommand{\Gr}{\smash{{\sG\kern-1.5pt}_q\kern-0.5pt(n,k)}}
\newcommand{\Gi}{\smash{{\sG\kern-1.5pt}_q\kern-0.5pt(n,i)}}
\newcommand{\Gj}{\smash{{\sG\kern-1.5pt}_q\kern-0.5pt(n,j)}}
\newcommand{\Grmk}{\smash{{\sG\kern-1.5pt}_q\kern-0.5pt(n,n-k)}}
\newcommand{\Grdk}{\smash{{\sG\kern-1.5pt}_q\kern-0.5pt(2k,k)}}
\newcommand{\Grekappa}{\smash{{\sG\kern-1.5pt}_q\kern-0.5pt(n,e+1-\kappa)}}
\newcommand{\Grtwoekappa}{\smash{{\sG\kern-1.5pt}_q\kern-0.5pt(n,2e+1-\kappa)}}
\newcommand{\Gremkappa}{\smash{{\sG\kern-1.5pt}_q\kern-0.5pt(n,e-\kappa)}}
\newcommand{\Gn}{\smash{{\sG\kern-1.5pt}_2\kern-0.5pt(n,n{-}1)}}
\newcommand{\Gnq}{\smash{{\sG\kern-1.5pt}_q\kern-0.5pt(n,n{-}1)}}
\newcommand{\Gone}{\smash{{\sG\kern-1.5pt}_2\kern-0.5pt(n,1)}}
\newcommand{\Gqone}{\smash{{\sG\kern-1.5pt}_q\kern-0.5pt(n,1)}}
\newcommand{\GTwo}{\smash{{\sG\kern-1.5pt}_2\kern-0.5pt(n,k)}}
\newcommand{\GTwonk}[2]{{\smash{{\sG\kern-1.5pt}_2\kern-0.5pt({#1},{#2})}}}
\newcommand{\Gnk}{\smash{{\sG\kern-1.5pt}_2\kern-0.5pt(n,n{-}k)}}
\newcommand{\Greone}{\smash{{\sG\kern-1.5pt}_q\kern-0.5pt(n,e{+}1)}}
\newcommand{\Gretwo}{\smash{{\sG\kern-1.5pt}_q\kern-0.5pt(n,e{+}2)}}
\newcommand{\be}[1]{\begin{equation}\label{#1}}
\newcommand{\ee}{\end{equation}}
\newcommand{\Cref}[1]{Co\-rol\-la\-ry\,\ref{#1}}
\newtheorem{theorem}{Theorem}
\newtheorem{remark}{Remark}
\newtheorem{corollary}[theorem]{Corollary}
\newtheorem{construction}{Construction}
\begin{document}

\title{PIR Array Codes with Optimal PIR Rates}

\author{\IEEEauthorblockN{Simon R. Blackburn}
\IEEEauthorblockA{Dept. of Mathematics\\
Royal Holloway University of London\\
Egham, Surrey TW20 0EX, United Kingdom \\
Email: s.blackburn@rhul.ac.uk} \and
\IEEEauthorblockN{Tuvi Etzion}
\IEEEauthorblockA{Dept. of Computer Science\\
Technion-Israel Institute of Technology\\
Haifa 32000, Israel \\
Email: etzion@cs.technion.ac.il}}

\maketitle
\begin{abstract}
There has been much recent interest in Private information Retrieval (PIR) in models
where  a database is stored across several servers using coding techniques
from distributed storage, rather than being simply replicated. In particular,
a recent breakthrough result of Fazelli, Vardy and Yaakobi introduces the notion
of a PIR code and a PIR array code, and uses this notion to produce efficient protocols.

In this paper we are interested in designing PIR array codes.
We consider the case when we have $m$ servers, with each server storing
a fraction $(1/\omegaR)$ of the bits of the database; here $\omegaR$ is a fixed rational
number with $\omegaR > 1$. We study the maximum PIR rate of a PIR array code with
the $k$-PIR property (which enables a $k$-server PIR protocol to be emulated on the $m$ servers),
where the PIR rate is defined to be $k/m$. We present
upper bounds on the achievable rate, some constructions, and ideas how to obtain PIR array
codes with the highest possible PIR rate. In particular, we present constructions that asymptotically
meet our upper bounds, and the exact largest PIR rate is obtained when
$1 < \omegaR \leq 2$.
\end{abstract}

%%%%%%%%%%%%%%%%%%%%%%%%%%%%%%%%%
%%%
%%%     INTRODUCTION
%%%
%%%%%%%%%%%%%%%%%%%%%%%%%%%%%%%%%%%%

\section{Introduction}

A Private Information Retrieval (PIR) protocol allows a user to retrieve
a data item from a database, in such a way that the servers storing the data
will get no information about which data item was retrieved. The problem was introduced in~\cite{CGKS98}.
The protocol to achieve this goal assumes that the servers are curious but honest,
so they don't collude. It is also assumed that the database is error-free
and synchronized all the time. For a set of $k$ servers, the goal is to design a $k$-server PIR protocol,
in which the efficiency of the PIR is measured by the total number of bits transmitted by all parties involved. This model is called a \emph{information-theoretic} PIR; there is also \emph{computational}
PIR, in which the privacy is defined in terms of the inability of a server to compute
which item was retrieved in reasonable time~\cite{KuOs97}. In this paper we will be concerned
only with information-theoretic PIR.

The classical model of PIR assumes that each server stores
a copy of an $n$-bit database, so the \emph{storage overhead}, namely the ratio between the total
number of bits stored by all servers and the size of the
database, is~$k$. However, recent work combines PIR protocols
with techniques from distributed storage (where each server stores only some of the database)
to reduce the storage overhead. This approach was first considered in~\cite{SRR14},
and several papers have developed this direction further:~\cite{ALS14,BEP16,CHY14,FVY15,FVY15a,RaVa16,TaElR16,ZWWG16}.
Our discussion will follow the breakthrough approach presented
by Fazeli, Vardy, and Yaakobi~\cite{FVY15,FVY15a}, which shows that $m$ servers
(for some $m>k$) may emulate a $k$-server PIR protocol with storage overhead significantly lower than $k$.

Fazeli et al~\cite{FVY15a} introduce the key notion of a $[t\times m,p]$ $k$-PIR array code,
which is defined as follows. Let $x_1,x_2,\ldots ,x_p$ be a basis of a vector
space of dimension $p$ (over some finite field $\mathbb{F}$). A \emph{$[t\times m,p]$ array
code} is simply a $t\times m$ array, each entry containing a linear combination of the basis
elements $x_i$. A $[t\times m,p]$ array code satisfies the \emph{$k$-PIR property}
(or is a \emph{$[t\times m,p]$ $k$-PIR array code}) if for every $i\in\{1,2,\ldots ,p\}$
there exist $k$ pairwise disjoint subsets $S_1,S_2,\ldots,S_k$ of columns so
that for all $j\in\{1,2,\ldots ,k\}$ the element $x_i$ is contained in the linear
span of the entries of the columns $S_j$. The following example of a (binary)
$[7\times 4,12]$ $3$-PIR array code is taken from~\cite{FVY15a}:

\begin{scriptsize}
\[
\begin{array}{|c|c|c|c|}\hline
x_1&x_2&x_3&x_1+x_2+x_3\\\hline
x_2&x_3&x_1&x_6\\\hline
x_4&x_5&x_4+x_5+x_6&x_4\\\hline
x_5&x_6&x_8&x_9\\\hline
x_7&x_7+x_8+x_9&x_9&x_7\\\hline
x_8&x_{10}&x_{11}&x_{12}\\\hline
x_{10}+x_{11}+x_{12}&x_{11}&x_{12}&x_{10}.\\\hline
\end{array}
\]
\end{scriptsize}
The $3$-PIR property means that for all $i\in\{1,2,\ldots,12\}$ we can
find $3$ disjoint subsets of columns whose entries span a subspace containing $x_i$.
For example, $x_5$ is in the span of the entries in the subsets
$\{1\}$, $\{2\}$ and $\{3,4\}$ of columns; $x_{11}$ is in the span of the
entries in the subsets $\{1,4\}$, $\{2\}$ and $\{3\}$ of columns.

In the example above, many of the entries in the array consist
of a single basis element; we call such entries \emph{singletons}.

Fazeli et al use a $[t\times m,p]$ $k$-PIR array code as follows. The database
is partitioned into $p$ parts $x_1,x_2,\ldots ,x_p$, each part encoded as an
element of the finite field $\mathbb{F}$. Each of a set of $m$ servers
stores $t$ linear combinations of these parts; the $j$th server stores linear
combinations corresponding to the $j$th column of the array code. We say that
the $j$th server has $t$ \emph{cells}, and stores one linear combination
in each cell. They show that the $k$-PIR property of the array
code allows the servers to emulate all known efficient $k$-server PIR protocols.
But the storage overhead is $tm/p$, and this can be significantly smaller
than $k$ if a good array code is used.
Define $\omegaR=p/t$, so $\omegaR$ can be thought of as the reciprocal of the proportion of the database stored on each server.
For small storage overhead, we would like the ratio
\begin{equation}
\label{eq:ratio_overhead}
\frac{k}{tm/p}=\omegaR\frac{k}{m}
\end{equation}
to be as large as possible. We define the \emph{PIR rate} (\emph{rate} in short)
of a $[t\times m,p]$ $k$-PIR array code to be $k/m$
(this rate should not be confused with the rate of the code). In applications, we would like
the rate to be as large as possible for several reasons: when $\omegaR$, which represents
the amount of storage required at each server, is fixed such schemes give small storage
overhead compared to $k$ (see (\ref{eq:ratio_overhead})); we wish to use a minimal
number $m$ of servers, so $m$ should be as small as possible;
large values of $k$, compared to $m$, are desirable, as they
lead to protocols with lower communication complexity.
We will fix the number $t$ of cells in a server, and the proportion $1/\omegaR$ of the
database stored per server and we seek to maximise the PIR rate.
Hence, we define $g(\omegaR,t)$
to be the largest rate of a $[t\times m,p]$ $k$-PIR array code when $\omegaR$ and $t$
(and so $p$) are fixed. We define $g(\omegaR)=\overline{\lim}_{t\rightarrow\infty}g(\omegaR,t)$.

Most of the analysis in~\cite{FVY15,FVY15a} was restricted to the case $t=1$.
The following two results presented in~\cite{FVY15a} are the most relevant for
our discussion. The first result corresponds to the case where each server holds
a single cell, i.e. we have a PIR code (not an array code with $t>1$).

\begin{theorem}
\label{thm:t=1}
For any given positive integer $\omegaR$, $g(\omegaR,1) = (2^{\omegaR-1})/(2^\omegaR-1)$.
\end{theorem}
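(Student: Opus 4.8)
The plan is to prove the two matching inequalities $g(\omegaR,1)\ge (2^{\omegaR-1})/(2^{\omegaR}-1)$, via an explicit construction over $\F_2$, and $g(\omegaR,1)\le (2^{\omegaR-1})/(2^{\omegaR}-1)$, by induction on $\omegaR$.

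For the lower bound I would identify the $\omegaR$-dimensional space with $\F_2^{\omegaR}$, taking $x_1,\dots,x_{\omegaR}$ to be the unit vectors, and let the $m=2^{\omegaR}-1$ columns of a $[1\times m,\omegaR]$ array code be all the distinct nonzero vectors of $\F_2^{\omegaR}$. For each fixed $i$ the recovery sets are the singleton $\{x_i\}$ together with the $2$-sets $\{v,v+x_i\}$ as $v$ ranges over the $2^{\omegaR}-2$ vectors outside $\{0,x_i\}$; since $v\mapsto v+x_i$ is a fixed-point-free involution on that set, these $2$-sets are $2^{\omegaR-1}-1$ pairwise disjoint subsets, each spanning $x_i$, and together with $\{x_i\}$ they give $k=2^{\omegaR-1}$ pairwise disjoint recovery sets. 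Hence this code has rate $2^{\omegaR-1}/(2^{\omegaR}-1)$, giving the lower bound.

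For the upper bound I would induct on $\omegaR$. The base case $\omegaR=1$ is immediate: every nonzero column is a one-element recovery set for $x_1$, so $k$ disjoint recovery sets force $m\ge k$, whence $g(1,1)=1=(2^{0})/(2^{1}-1)$. For the inductive step, let $C$ be any $[1\times m,\omegaR]$ $k$-PIR array code over any field, with columns $v_1,\dots,v_m$ and no zero column, and for each $i$ let $d_i$ be the number of columns that are scalar multiples of $x_i$. I would establish two facts. First, $d_i\ge 2k-m$: among the $k$ disjoint recovery sets for $x_i$, the singleton ones use at most $d_i$ columns (each a multiple of $x_i$), while by minimality a recovery set of size $\ge 2$ contains no multiple of $x_i$ (such a column would by itself span $x_i$) and so is drawn from the other $m-d_i$ columns; if $a_i\le d_i$ of the recovery sets are singletons then $2(k-a_i)\le m-d_i$, and with $a_i\le d_i$ this gives $d_i\ge 2k-m$. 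Second, deleting coordinate $i$ and the $d_i$ columns that are multiples of $x_i$ turns $C$ into a $[1\times(m-d_i),\,\omegaR-1]$ $k$-PIR array code: for $i'\ne i$, any recovery set for $x_{i'}$ stays a recovery set after puncturing, since the deleted columns vanish under the projection that forgets coordinate $i$, the surviving part is nonempty (else $x_{i'}$ would be a multiple of $x_i$), and disjointness is preserved. Applying the induction hypothesis to the punctured code gives $k/(m-d_i)\le g(\omegaR-1,1)=2^{\omegaR-2}/(2^{\omegaR-1}-1)$, i.e. $d_i\le m-k(2^{\omegaR-1}-1)/2^{\omegaR-2}$; combining with $d_i\ge 2k-m$ yields $2k-m\le m-k(2^{\omegaR-1}-1)/2^{\omegaR-2}$, which simplifies to $k\,(4-2^{2-\omegaR})\le 2m$, that is $k/m\le 2^{\omegaR-1}/(2^{\omegaR}-1)$, closing the induction.

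The main obstacle is the upper bound, and in particular not stopping at a crude count: double-counting columns against recovery sets only gives $k/m\le(\omegaR+1)/(2\omegaR)$, which for every $\omegaR\ge 2$ is strictly weaker than the target $2^{\omegaR-1}/(2^{\omegaR}-1)$. The decisive idea is to feed the bound back into itself through the puncturing step, and the point to be careful about there is that a recovery set for $x_{i'}$ may legitimately use columns that are multiples of $x_i$; one must check that after puncturing coordinate $i$ those columns drop out harmlessly and what remains is a genuine family of pairwise disjoint, nonempty recovery sets for the $(\omegaR-1)$-dimensional code. Once that is in hand, combining the two facts with the inductive rate bound is a short computation.
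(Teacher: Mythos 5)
Your proof is correct. Note first that the paper itself gives no proof of this statement: Theorem~1 is quoted from Fazeli, Vardy and Yaakobi \cite{FVY15a} as background, so there is no in-paper argument to compare against. Your two halves both check out. The lower bound via the simplex-type code (all $2^{\omegaR}-1$ nonzero vectors of $\F_2^{\omegaR}$ as columns, with the fixed-point-free involution $v\mapsto v+x_i$ pairing up the columns outside $\{0,x_i\}$) is the standard optimal construction and gives $k=2^{\omegaR-1}$ as claimed. The upper bound by induction is the substantive part, and your two facts are sound: the count $2(k-a_i)\le m-d_i$ together with $a_i\le d_i$ does give $d_i\ge 2k-m$, and the puncturing step is handled correctly — you rightly observe that columns proportional to $x_i$ vanish under the projection killing $x_i$, that the surviving part of each recovery set for $x_{i'}$ is nonempty because $x_{i'}$ is not a multiple of $x_i$, and that disjointness survives. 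The arithmetic $2k-m\le m-k(2^{\omegaR-1}-1)/2^{\omegaR-2}$ does simplify to $k/m\le 2^{\omegaR-1}/(2^{\omegaR}-1)$. Your diagnosis that the naive averaging bound only yields $(\omegaR+1)/(2\omegaR)$ (which is exactly the paper's Theorem~3 specialized to $t=1$) and that a recursive self-improvement is needed is also accurate. The only point worth making explicit in a final write-up is the WLOG behind ``by minimality'': you should state that each recovery set may be replaced by a minimal subset whose span still contains $x_i$, which preserves disjointness; only then does a recovery set of size at least two contain no multiple of $x_i$. With that sentence added, the argument is complete.
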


The second result is a consequence of the only construction of PIR array codes
given in~\cite{FVY15a} which is not an immediate consequence of the constructions
for PIR codes.

\begin{theorem}
\label{thm:FVYbound}
For any integer $\omegaR \geq 3$, we have $g(\omegaR,\omegaR-1) \geq \omegaR / (2\omegaR-1)$.
\end{theorem}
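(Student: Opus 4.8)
The statement is a lower bound on $g(\omegaR,\omegaR-1)$, so it suffices to exhibit one array code of the right shape: a $[(\omegaR-1)\times m,\,\omegaR(\omegaR-1)]$ $k$-PIR array code with $k/m = \omegaR/(2\omegaR-1)$, the first target being $m = 2\omegaR-1$, $k=\omegaR$. Note that merely stacking $\omegaR-1$ copies of the optimal $t=1$ code of \Tref{thm:t=1} would only give rate $2^{\omegaR-1}/(2^\omegaR-1)$, which is strictly less than $\omegaR/(2\omegaR-1)$ for $\omegaR\ge3$; the whole point is to let the layers share columns. Concretely, I would work over $\Ftwo$ and lay out the $p=\omegaR(\omegaR-1)$ basis elements as a $(\omegaR-1)\times\omegaR$ grid $(x_{i,j})$. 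Of the $2\omegaR-1$ array columns, $\omegaR$ would be ``systematic'': column $j$ stores grid-column $j$, i.e. its cell in row $i$ is $x_{i,j}$. Then every basis element lies in exactly one cell, and the column containing it is automatically one recovery set for it. The remaining $\omegaR-1$ columns would be ``parity'' columns carrying linear combinations still to be chosen, and these serve all $\omegaR-1$ layers at once.

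For a target $x_{i_0,j_0}$ the plan is to take systematic column $j_0$ as the first of the $\omegaR$ disjoint recovery sets, and then to assemble $\omegaR-1$ further pairwise-disjoint recovery sets from the remaining $\omegaR-1$ systematic columns together with the $\omegaR-1$ parity columns --- $2(\omegaR-1)$ columns split into $\omegaR-1$ sets, so typically each further set is a pair consisting of one leftover systematic column and one parity column, the assignment being allowed to depend on $(i_0,j_0)$. The mechanism that makes such a pair work is exactly the one visible in the $[7\times 4,12]$ example of the Introduction: a parity cell is a short sum $x_{i_0,j_0}+(\text{a few other basis elements})$ whose other terms are among the singletons held by the systematic column it is paired with, so the pair spans $x_{i_0,j_0}$. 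The design task is then to choose the parity combinations, and for each target a matching between the leftover systematic columns and the parity columns, so that every matched pair spans the target; I would try to generate both from a Latin-square / resolvable-design type structure on the $\omegaR$ systematic columns so the pattern rotates coherently as $(i_0,j_0)$ varies.

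The crux --- and where I expect the real difficulty --- is making one global choice of parity combinations and matchings that keeps the $\omegaR-1$ recovery sets disjoint simultaneously for all $\omegaR(\omegaR-1)$ targets, handling the awkward configurations (the target's grid-column being one another recovery set wants, or a target for which no matching into pairs exists and a recovery set must have size three or more). A short counting argument shows the most naive choice --- one two-term sum per row in each parity column, with all recovery sets pairs --- cannot work: tracing through which column-subsets can span a given element forces the auxiliary incidence structure on the systematic columns to be a star centred at every column at once, which is impossible. So the parity cells must genuinely mix rows, or use longer sums, or some recovery sets must be larger than pairs; supplying a structure that does this and verifying it on the finitely many configuration types is the substance of the proof. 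Finally, if the bare $m=2\omegaR-1$ construction proves over-constrained, the same rate $\omegaR/(2\omegaR-1)$ is attained by $(k,m)=(c\omegaR,\,c(2\omegaR-1))$ for a suitable positive integer $c$: since $t=\omegaR-1$ and $p=\omegaR(\omegaR-1)$ are untouched this does not affect $g(\omegaR,\omegaR-1)$, and the extra columns relax the disjointness constraints by leaving more room for the $k$ recovery sets of each basis element.
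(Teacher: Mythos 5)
First, a point of reference: this paper does not actually prove Theorem~\ref{thm:FVYbound} --- it is quoted from Fazeli, Vardy and Yaakobi~\cite{FVY15a} as background, being ``a consequence of the only construction of PIR array codes given in~\cite{FVY15a}.'' Any proof therefore has to supply that construction, and this is exactly what your proposal does not do. You set the parameters correctly ($t=\omegaR-1$, $p=\omegaR(\omegaR-1)$, target rate $\omegaR/(2\omegaR-1)$), you rightly note that stacking copies of the $t=1$ code of Theorem~\ref{thm:t=1} is too weak, and the systematic-plus-parity shape is a sensible guess; but the entire mathematical content of the theorem is the explicit choice of the parity cells together with the verification that every part has $\omegaR$ pairwise disjoint recovery sets. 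You explicitly defer this (``supplying a structure that does this \ldots is the substance of the proof''), so what you have is a plan, not a proof, and the lower bound is not established.

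Second, your counting remark is essentially right and, for the shape you propose, can be pushed to an outright impossibility of the bare $m=2\omegaR-1$ code when $\omegaR\ge 3$: with $\omegaR$ systematic columns whose singletons partition the parts and $\omegaR-1$ pure-parity columns, each part needs $\omegaR-1$ recovery sets that must all be (systematic, parity) pairs; a dimension count gives at most $\omegaR-1$ newly recovered parts per pair, so the supply $\omegaR(\omegaR-1)^2$ exactly matches the demand, and the resulting rigidity forces the span of each parity column to lie in the intersection of $\omegaR$ coordinate subspaces with empty common support, i.e.\ to be zero. So you are forced into the scaled version $(k,m)=(c\omegaR,\,c(2\omegaR-1))$ with $c\ge 2$ --- and there, again, you exhibit nothing. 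The missing ingredient is the kind of replicated design-plus-matching argument this paper uses for its own constructions (compare Construction~\ref{con:1<s<2} and the use of Hall's theorem via Corollary~\ref{cor:Hall} in the proof of Theorem~\ref{thm:1<s<2}): take every admissible systematic column with a suitable multiplicity, take parity columns whose non-singleton cell is a sum over a block of parts, and pair them off for each target by a perfect matching in a regular bipartite graph. Until a concrete family of this kind is written down and checked, the inequality $g(\omegaR,\omegaR-1)\ge\omegaR/(2\omegaR-1)$ remains unproved.
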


The goal of this paper is first to generalize the results of Theorems~\ref{thm:t=1}
and~\ref{thm:FVYbound}
and to find codes with better rates for a given $\omegaR$.
We would like to find out the behavior of $g(\omegaR,t)$ as a function of $t$.
This will be done by providing several new constructions for $k$-PIR array codes
which will imply lower bounds on $g(\omegaR,t)$ for a large range of pairs $(\omegaR,t)$.
This will immediately imply a related bound on $g(\omegaR)$ for various values of $\omegaR$.
Contrary to the construction in~\cite{FVY15a}, the value of $\omegaR$ in
our constructions is not necessarily an integer (this possible feature was
mentioned in~\cite{FVY15a}): each rational number greater than one will be considered.
We will also provide various upper bounds on $g(\omegaR,t)$, and related upper bounds on $g(\omegaR)$.
It will be proved that some of the upper bounds on $g(\omegaR,t)$ are tight and also our main upper
bound on $g(\omegaR)$ is tight.

To summarise, our notation used in the remainder of the paper is given by:
\begin{enumerate}
\item $n$ - the number of bits in the database.

\item $p$ - number of parts the database is divided into. The parts will
be denoted by $x_1 , x_2 , \ldots , x_p$.

\item $\frac{1}{\omegaR}$ - the fraction of the database stored on a server.

\item $m$ - the number of servers (i.e. the number of columns in the array).

\item $t$ - number of cells in a server (or the number of rows in the array); so $t=p/\omegaR$.

\item $k$ - the array code allows the servers to emulate a $k$-PIR protocol.

\item $g(\omegaR,t)$ - the largest PIR rate of a $[t\times m,p]$ $k$-PIR array code.

\item $g(\omegaR)=\overline{\lim}_{t\rightarrow\infty}g(\omegaR,t)$.
\end{enumerate}

Clearly, a PIR array code is characterized by the parameters, $\omegaR$, $t$, $k$, and $m$
(the integer $n$ does not have any effect on the other parameters, except for some possible divisibility conditions).
In~\cite{FVY15a}, where the case $t=1$ was considered, the goal was to find the smallest $m$ for given $\omegaR$ and $k$.
This value of $m$ was denoted by the function $M(\omegaR,k)$. The main discussion in~\cite{FVY15a} was
to find bounds on $M(\omegaR,k)$ and to analyse the redundancy $M(\omegaR,k)-s$ and the storage overhead $M(\omegaR,k)/\omegaR$.
When PIR array codes are discussed, the extra parameter is $t$ and given $\omegaR$, $t$, and $k$, the goal
is to find the smallest $m$. We denote this value of $m$ by $M(\omegaR,t,k)$. Clearly, $M(\omegaR,t,k) \leq M(\omegaR,k)$, but
the main target is to find the range for which $M(\omegaR,t,k) < M(\omegaR,k)$, and especially when the storage overhead is low.
Our discussion answers some of these questions, but unfortunately not for small storage overhead
(our storage overhead is much smaller than $k$ as required, but $k$ is relatively large). Hence, our results provide an indication
of the target to be achieved, and this target is left for future work.
We will fix two parameters, $t$ and $\omegaR$,
and examine the ratio $k/m$ (which might require both $k$ and $m$ to be large and as a consequence the storage overhead won't be low).
To have a lower storage overhead we probably need to compromise on a lower ratio of $k/m$.

The rest of this paper is organized as follows. In Section~\ref{sec:upper_bound} we
present a simple upper bound on the value of $g(\omegaR)$. Though this bound is attained, we prove that
$g(\omegaR,t)<g(\omegaR)$ for any fixed values of $\omegaR$ and $t$. We will also state a more complex upper bound
on $g(\omegaR,t)$ for various pairs $(\omegaR,t)$, and it will be shown to be attainable when $1< \omegaR \leq 2$.
In Section~\ref{sec:constructions}
we present a range of explicit constructions.
In Subsection~\ref{sec:s=2minus} we consider the case where $1 < \omegaR \leq 2$.
In Subsection~\ref{sec:rational} we consider the case where $\omegaR$ is rational number greater than~2.
In Section~\ref{sec:Asingletons} we present a construction in which at least $t-1$ cells
in each server are singletons. In Section~\ref{sec:Asingletons} we present a construction in which at least $t-1$ cells
in each server are singletons and its rate asymptotically meets
the upper bound. We believe that this construction always produces the best bounds
and prove this statement in some cases.
For lack of space we omit some proofs and some constructions. These can be found
in the full version of this paper~\cite{BlEt16}.

\section{Upper Bounds on the PIR Rate}
\label{sec:upper_bound}

In this section we will be concerned first with a simple general upper bound (Theorem~\ref{thm:upper_bound})
on the rate of a $k$-PIR array code for a fixed value of $\omegaR$ with $\omegaR>1$. This bound cannot be attained,
but is asymptotically optimal (as $t\rightarrow\infty$). This will motivate us to give
a stronger upper bound  (Theorem~\ref{thm:up_1<s<2}) on the rate $g(\omegaR,t)$ of
a $[t\times m,\omegaR t]$ $k$-PIR array code for various values of $t$ that can sometimes be attained.

\begin{theorem}
\label{thm:upper_bound}
For each rational number $\omegaR > 1$ we have that $g(\omegaR) \leq (\omegaR+1)/(2\omegaR)$. There is no $t$
such that $g(\omegaR,t) = (\omegaR+1)/(2\omegaR)$.
\end{theorem}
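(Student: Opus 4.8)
The plan is to count, in two different ways, the contributions of cells across the $m$ servers when we fix the $p = \omegaR t$ basis elements and examine the disjoint column-sets witnessing the $k$-PIR property. Fix a $[t\times m,p]$ $k$-PIR array code with $p=\omegaR t$. For each basis element $x_i$ we have $k$ pairwise disjoint column-subsets $S_1^{(i)},\dots,S_k^{(i)}$, each of whose entries span a space containing $x_i$. A single column contains only $t$ entries, hence spans a space of dimension at most $t$; so any column-subset $S$ of size $s$ spans a space of dimension at most $st$, and in particular in order to contain a given $x_i$ in its span it suffices to look at the total number of cells $t|S|$. Summing $|S_1^{(i)}|+\dots+|S_k^{(i)}|$ over all $i$ and using disjointness within each $i$ (so this sum is at most $m$ for each $i$) gives an upper bound of $mp$ on a certain double count; the point is to lower bound the same quantity. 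First I would argue that a column-subset of size $1$ — a single server — can help recover at most $t$ of the $x_i$ "for free'' as singletons, and more generally that a subset of size $s$ contributes at most $st$ to the dimension budget, so each $x_i$ needs its witnessing subsets to have total size at least $\lceil 1 \rceil$ trivially but collectively the subsets of size $1$ are scarce. The cleaner route: classify, for each $i$, how many of its $k$ witness sets $S_j^{(i)}$ are singleton columns versus have size $\ge 2$. A singleton column $c$ can serve as a size-$1$ witness for at most $t$ distinct basis elements (those in the span of its $t$ cells — at most $t$-dimensional). Since there are $m$ columns, the total number of (basis element, singleton-witness) incidences is at most $mt = m p/\omegaR$. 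The remaining witnesses, for each $i$, have size $\ge 2$, and each $x_i$ has $k$ witnesses total, so the number of (basis element, size-$\ge 2$-witness) incidences is at least $(k - \text{something})p$; weighting each size-$\ge 2$ witness by its size $\ge 2$ and using that for fixed $i$ the witnesses are disjoint (total size $\le m$) yields $2(k-1) \le m + (\text{slack})$, i.e.\ roughly $k/m \le (m+? )/(2m)$. Tracking the constants carefully with $p=\omegaR t$ should produce $g(\omegaR,t) \le \frac{1}{2}\big(1 + \frac{1}{\omegaR}\big) = \frac{\omegaR+1}{2\omegaR}$ up to an additive $O(1/t)$ error term, and letting $t\to\infty$ gives $g(\omegaR)\le(\omegaR+1)/(2\omegaR)$.

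For the second sentence — that no finite $t$ attains the bound — I would examine the inequality chain above and show that equality forces a rigid structure that is impossible. Equality would require: every column is used as a singleton witness for exactly $t$ basis elements (so every column has $t$ singleton cells, i.e.\ the whole array consists of singletons), every non-singleton witness set has size exactly $2$, and for every $i$ the $k$ witness sets partition all $m$ columns with no column left over. But if every cell is a singleton $x_{j}$, then a size-$2$ witness $\{c,c'\}$ spans a $\le 2t$-dimensional coordinate subspace, and containing $x_i$ in the span forces $x_i$ to be one of the $2t$ basis elements appearing in those two columns; combined with the partition condition one gets that the multiset of basis elements across the array is extremely constrained, and a short parity / counting contradiction (the number of singleton cells is $tm$, each basis element can appear in at most $\omegaR$-ish many of them before the disjointness of witnesses fails) rules it out. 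The expected main obstacle is precisely making this rigidity argument airtight: the naive counting gives the asymptotic bound easily, but squeezing out the strict inequality for every $t$ requires identifying exactly which slack terms I discarded and showing at least one of them is strictly positive. A clean way to handle it is to refine the size-$\ge 2$ count: some witness sets will necessarily have size $\ge 3$ (because a column's $t$ cells cannot simultaneously be singletons for $t$ different $x_i$'s and also participate efficiently in size-$2$ recovery of the other $p-t$ elements across only $m-1$ remaining columns), and each such triple costs an extra $+1$ in the dimension budget, breaking equality. I would set $\delta = p - t = (\omegaR-1)t > 0$ and show the number of "expensive'' witnesses (size $\ge 3$, or columns wasted) is at least $1$, hence $g(\omegaR,t) < (\omegaR+1)/(2\omegaR)$ strictly.

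Finally I would reconcile the strict inequality at every finite $t$ with the tightness of the limit: although $g(\omegaR,t) < (\omegaR+1)/(2\omegaR)$ for each $t$, the gap shrinks like $\Theta(1/t)$, so $g(\omegaR)=\overline{\lim}_t g(\omegaR,t) = (\omegaR+1)/(2\omegaR)$, consistent with the constructions promised later in the paper. This two-part structure — an averaging/counting bound that is asymptotically sharp, plus a rigidity argument forbidding equality — is the natural shape of the proof, and I expect the rigidity half to be where essentially all the work lies.
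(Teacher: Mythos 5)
The first half of your proposal is essentially the paper's argument. Counting (basis element, singleton-witness) incidences and bounding them by $mt$, then charging every remaining witness set at least $2$ and using disjointness to cap the total at $m$, is exactly what the paper does; the only cosmetic difference is that the paper picks the single basis element $x_u$ with the fewest singleton occurrences ($\alpha_u\leq tm/p=m/\omegaR$ by averaging) rather than summing over all $i$. Note also that this count gives the bound $k/m\leq\frac{1}{2}+\frac{1}{2\omegaR}$ exactly, with no $O(1/t)$ error term to discard --- your hedging there is unnecessary.

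The gap is in the second half. You correctly deduce that equality forces every cell to be a singleton, and you even write down the decisive fact --- that in an all-singleton array a witness set spans a coordinate subspace, so it contains $x_i$ in its span only if one of its columns actually holds the cell $x_i$ --- but you then abandon this for a ``parity / counting contradiction'' and a refinement claiming some witness sets must have size $\geq 3$. That refinement is not the operative mechanism and would be hard to make work: in the all-singleton case the problem is not that witnesses are too large, it is that \emph{every} witness set for $x_i$, of whatever size, must contain one of the $\alpha_i$ columns holding $x_i$, so pairwise disjointness immediately forces $k\leq\alpha_i=m/\omegaR$. Since $\omegaR>1$ we have $1/\omegaR<(\omegaR+1)/(2\omegaR)$, contradicting the assumed equality. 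That one line finishes the proof; no parity argument, no size-$3$ witnesses, and no analysis of how the basis elements are distributed across the array is needed. As written, your rigidity step is a plan rather than a proof, and the specific route you propose for it points in the wrong direction.
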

\begin{proof}
Suppose we have a $[t\times m,p]$ $k$-PIR array code with $p/t=\omegaR$. To prove the theorem,
it is sufficient to show that $k/m<(\omegaR+1)/(2 \omegaR )$. Since the $k$-PIR property only depends on the span of the contents of a server's cells, we may assume,
without loss of generality, that if $x_i$ can be derived from information
on a certain server then the singleton $x_i$ is stored as the value of one of the cells of this server.

Let $\alpha_i$ be the number of servers which hold the singleton $x_i$ in one of their cells.
Since each server has $t$ cells, we find that $\sum_{i=1}^p \alpha_i\leq tm$,
and so the average value of the integers $\alpha_i$ is at most $tm/p=m/\omegaR$. So there
exists $u\in\{1,2,\ldots p\}$ such that $\alpha_u\leq m/\omegaR$ (and we can only
have $\alpha_u= m/\omegaR$ when $\alpha_i= m/\omegaR$ for all $i\in\{1,2,\ldots ,p\}$).
Let $S^{(1)},S^{(2)},\ldots ,S^{(k)}\subseteq\{1,2,\ldots,m\}$ be disjoint
sets of servers, chosen so the span of the cells in each subset of servers
contains $x_u$. Such subsets exist, by the definition of a $k$-PIR array code.
If no server in a subset $S^{(j)}$ contains the singleton $x_u$, the subset $S^{(j)}$
must contain at least two elements (because of our assumption on singletons stated in the first paragraph of the proof).
So at most $\alpha_u$ of the subsets $S^{(j)}$ are of cardinality $1$.
In particular, this implies that $k\leq \alpha_u+(m-\alpha_u)/2$. Hence
\begin{equation}
\label{eq:u_bound}
\frac{k}{m} \leq \frac{\alpha_u + (m-\alpha_u)/2}{m}
=\frac{1}{2} + \frac{\alpha_u}{2m}
%\leq \frac{1}{2} + \frac{m/\omegaR}{2m}
%= \frac{1}{2} + \frac{1}{2\omegaR} = \frac{\omegaR+1}{2\omegaR}.
\end{equation}
$$
\leq \frac{1}{2} + \frac{m/\omegaR}{2m}
= \frac{1}{2} + \frac{1}{2\omegaR} = \frac{\omegaR+1}{2\omegaR}.
$$
We can only have equality in \eqref{eq:u_bound} when $\alpha_i=m/\omegaR$ for all $i\in\{1,2,\ldots,p\}$,
which implies that all cells in every server are singletons. But then
the span of subset of servers contains $x_i$ if and only if it contains
server with a cell $x_i$, and so $k\leq\alpha_i=m/\omegaR$. But this implies
that the rate $k/m$ of the array code is at most $1/\omegaR=2/(2\omegaR )$. This contradicts
the assumption that the rate of the array code is $k/m=(\omegaR+1)/(2 \omegaR )$,
since $\omegaR>1$. So $k/m<(\omegaR+1)/(2 \omegaR )$, as required.
\end{proof}

\begin{theorem}
\label{thm:up_1<s<2}
For any integer $t \geq 2$ and any positive integer $d$,
we have
\[
g(1 + \frac{d}{t},t) \leq \frac{(2d+1)t +d^2}{(t+d)(2d+1)}=1 - \frac{d^2+d}{(t+d)(2d+1)}.
\]
\end{theorem}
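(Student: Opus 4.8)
The plan is to refine the counting argument used in the proof of Theorem~\ref{thm:upper_bound}, pushing it beyond the first moment. Write $\omegaR = 1 + d/t$, so that $p = \omegaR t = t + d$. As in the previous proof, we may assume every server stores a cell equal to the singleton $x_i$ whenever $x_i$ lies in the span of that server's cells, and we let $\alpha_i$ be the number of servers holding the singleton $x_i$. The bound $\sum_{i=1}^p \alpha_i \leq tm$ still holds, so the average of the $\alpha_i$ is at most $tm/p = tm/(t+d)$. The key refinement is that we do not just need \emph{one} index with small $\alpha_u$: we want to exploit the gap between $tm/(t+d)$ and $m$ more carefully. For the index $u$ achieving $\alpha_u \leq tm/(t+d)$, the earlier argument gives $k \leq \alpha_u + (m - \alpha_u)/2$, and we want to improve the factor applied to the $m - \alpha_u$ servers that do not hold $x_u$ as a singleton.

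The improvement should come from the following observation. Among the $k$ disjoint subsets $S^{(1)}, \dots, S^{(k)}$ whose spans contain $x_u$, at most $\alpha_u$ have cardinality $1$; the remaining subsets have cardinality at least $2$. But among subsets of cardinality exactly $2$, at least one of the two servers involved does \emph{not} hold $x_u$ as a singleton but \emph{does} contribute a cell whose linear combination involves $x_u$ non-trivially — such a cell is not a singleton. So I would set up a second counting argument: count the total number of \emph{non-singleton} cells across all $m$ servers. Each server has $t$ cells; if a server holds $r$ singletons it holds $t-r$ non-singletons, and summing the singleton counts over a suitable set of indices and comparing with $tm$ should bound the number of non-singleton cells from above. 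Then each cardinality-$2$ subset $S^{(j)}$ not containing a singleton-$x_u$ server consumes at least one non-singleton cell (the one spanning $x_u$), and distinct subsets consume distinct cells since the subsets are disjoint. This converts a bound on non-singleton cells into a bound on the number of ``small'' retrieval subsets, improving the coefficient in the estimate $k \leq \alpha_u + \tfrac12(\text{small subsets}) + \tfrac{1}{3}(\text{rest})$ — or more precisely, optimizing over how many subsets have size $1$, size $2$, and size $\geq 3$ under the two constraints (total servers used $\leq m$, total non-singleton cells used $\leq$ bound).

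Concretely, I expect the final step to be a small linear program: maximize $a_1 + a_2 + a_3$ (numbers of retrieval subsets of size $1$, $2$, and $\geq 3$, where I lump all sizes $\geq 3$ together as size exactly $3$ for an upper bound) subject to $a_1 \leq \alpha_u$, $a_1 + 2a_2 + 3a_3 \leq m$, and $a_2 + (\text{something})\cdot a_3 \leq N$ where $N$ is the non-singleton-cell budget — and then plug in $\alpha_u \leq tm/(t+d)$ and the value of $N$. The target value $1 - \frac{d^2+d}{(t+d)(2d+1)}$ strongly suggests that the extremal configuration has $a_1 = tm/(t+d)$ singletons and the rest of the servers arranged so that the binding constraint produces the denominator $2d+1$; I would reverse-engineer $N$ from this. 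A natural guess is that the number of non-singleton cells is at most $dm/(t+d) \cdot$ (something), since each server stores $t+d$ ``worth'' of parts using only $t$ cells, forcing on average $d$ parts-worth of overlap, hence at least order-$d$ non-singleton structure but from the perspective of a fixed $x_u$ only a $1/p$-fraction is relevant; balancing these should yield exactly the claimed expression. The main obstacle will be identifying the correct combinatorial quantity to count as the ``second constraint'' and proving the clean inequality bounding it — once the two constraints are pinned down correctly, the optimization and the algebraic simplification to $1 - \frac{d^2+d}{(t+d)(2d+1)}$ should be routine, and the equality-analysis (showing, as in Theorem~\ref{thm:upper_bound}, when the bound can fail to be strict) follows the same template as before.
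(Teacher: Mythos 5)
Your instinct --- that the proof must add a second, ``non-singleton'' counting constraint to the first-moment argument of Theorem~\ref{thm:upper_bound} --- is the right one, but as written the proposal has a genuine gap, and the specific quantity you propose to count does not yield the stated bound. You suggest bounding $N$, the total number of non-singleton cells, and using that each size-$2$ recovery set for $x_u$ consumes one such cell. That count is at most $\Delta := tm-\sum_{i}\alpha_i$, and feeding $a_2\leq\Delta$ together with $a_1\leq\alpha_u\leq (tm-\Delta)/(t+d)$ and $a_1+2a_2\leq m$ into your linear program gives (after optimizing over $\Delta$) the bound $k/m\leq \bigl(t(2p-1)+d(p-1)\bigr)/\bigl(p(2p-1)\bigr)$ with $p=t+d$, which is strictly \emph{weaker} than the claimed $\bigl(t(2d+1)+d^2\bigr)/\bigl((t+d)(2d+1)\bigr)$ (e.g.\ $4/5$ versus $7/9$ for $t=2$, $d=1$). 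The missing idea is that the budget must be measured per part, not per cell: let $A_j$ be the set of singletons of server $j$, $\delta_j=t-|A_j|$, and let $\beta_i$ count the servers $j$ for which coordinate $i$ lies in the support of the span of $j$'s cells but $x_i$ is not a singleton of $j$. Every size-$\geq 2$ recovery set for $x_i$ avoiding singleton-$x_i$ servers must contain such a server (project the representation $x_i=\sum_j c_j$ onto coordinate $i$), so the number of those sets is at most $\min\bigl(\beta_i,(m-\alpha_i)/2\bigr)$. The crucial inequality is then $\sum_i\beta_i\leq\sum_j(d+\delta_j)[\delta_j\geq 1]\leq (d+1)\Delta$, since a server with $\delta_j=0$ contributes nothing and one with $\delta_j\geq1$ covers at most $p-|A_j|=d+\delta_j\leq(d+1)\delta_j$ parts non-trivially. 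It is exactly this factor $(d+1)$ that manufactures the $2d+1$ in the denominator; no bound phrased purely in terms of the number of non-singleton cells can see it.

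A second, related flaw is your choice to fix a single index $u$ with $\alpha_u\leq tm/(t+d)$ and then run the optimization for that $u$: the part with the fewest singleton occurrences may well be the part with the largest $\beta_u$, so the two constraints cannot be imposed simultaneously on your chosen $u$. Instead one should bound, for \emph{every} $i$,
\[
k\;\leq\;\alpha_i+\min\Bigl(\beta_i,\tfrac{m-\alpha_i}{2}\Bigr)\;\leq\;\alpha_i+\tfrac{1}{2d+1}\,\beta_i+\tfrac{2d}{2d+1}\cdot\tfrac{m-\alpha_i}{2},
\]
and average over $i=1,\dots,t+d$ using $\sum_i\alpha_i=tm-\Delta$ and $\sum_i\beta_i\leq(d+1)\Delta$; the weight $1/(2d+1)$ is chosen precisely so that the coefficient of $\Delta$ vanishes, leaving $k/m\leq\bigl(t(2d+1)+d^2\bigr)/\bigl((t+d)(2d+1)\bigr)$. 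You candidly flag that ``identifying the correct combinatorial quantity'' is the main obstacle and propose to reverse-engineer it from the answer; that obstacle is the theorem, so the proposal as it stands is a plan rather than a proof.
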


\begin{remark}
We note that we can always write $s=1+d/t$ whenever $s>1$, since $s=p/t$. So Theorem~\label{thm:up_1<s<2} places no extra restrictions on $s$.
\end{remark}

\section{Constructions and Lower Bounds}
\label{sec:constructions}

In this section we will propose various constructions for PIR array codes; these yield
lower bounds on $g(\omegaR,t)$ and on $g(\omegaR)$.
The constructions yield an improvement
on the lower bound on $g(\omegaR)$ implied by Theorem~\ref{thm:FVYbound}. They also cover all rational values of $\omegaR>1$,
and not just integer values of $\omegaR$. We are interested in constructions in which
the number of servers is as small as possible, although the main
goal in this paper is providing a lower bound on the rate.
In the constructions below, we use Hall's marriage Theorem~\cite{Hal35}:

\begin{theorem}
\label{thm:Hall}
In a finite bipartite graph $G=(V_1\cup V_2,E)$,
there is perfect matching if for each subset $X$ of $V_1$, the number
of vertices in $V_2$ connected to vertices of $X$ has at least size $|X|$.
\end{theorem}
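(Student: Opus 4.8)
The plan is to prove this by induction on $n = |V_1|$; only sufficiency is asserted, so that is all that needs proof. Throughout, write $N(X) \subseteq V_2$ for the set of vertices adjacent to at least one vertex of $X$, so the hypothesis reads $|N(X)| \geq |X|$ for all $X \subseteq V_1$, and the goal is a matching saturating every vertex of $V_1$. The base case $n = 1$ is immediate: the lone vertex of $V_1$ has a neighbour by the hypothesis, and any incident edge is the required matching. For the inductive step, assume the statement for every finite bipartite graph with fewer than $n$ vertices on the left.

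The argument splits according to whether $G$ has a \emph{tight} proper subset. First suppose $|N(X)| \geq |X| + 1$ for every nonempty $X \subsetneq V_1$. Choose any $v \in V_1$, any neighbour $w$ of $v$, and delete both $v$ and $w$ to form $G'$. Deleting the single vertex $w$ from $V_2$ decreases $|N(X)|$ by at most one for each $X \subseteq V_1 \setminus \{v\}$, so $|N_{G'}(X)| \geq |N_G(X)| - 1 \geq |X|$; the hypothesis holds in $G'$, which has $n - 1$ left-vertices, so induction gives a matching of $V_1 \setminus \{v\}$ in $G'$, and adjoining the edge $vw$ saturates $V_1$.

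In the remaining case there is a nonempty $X_0 \subsetneq V_1$ with $|N(X_0)| = |X_0|$. Apply induction to the subgraph induced on $X_0 \cup N(X_0)$, whose Hall condition is inherited verbatim (neighbourhoods of subsets of $X_0$ are the same as in $G$), to obtain a matching $M_1$ saturating $X_0$. Next apply induction to the subgraph $G_2$ induced on $(V_1 \setminus X_0) \cup (V_2 \setminus N(X_0))$: for $Y \subseteq V_1 \setminus X_0$ one has $N_G(Y \cup X_0) = N_G(Y) \cup N(X_0)$, hence $|N_{G_2}(Y)| = |N_G(Y) \setminus N(X_0)| = |N_G(Y \cup X_0)| - |N(X_0)| \geq (|Y| + |X_0|) - |X_0| = |Y|$, so the hypothesis holds in $G_2$ and induction yields a matching $M_2$ saturating $V_1 \setminus X_0$. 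As $M_1$ and $M_2$ use disjoint vertex sets, $M_1 \cup M_2$ saturates $V_1$, completing the induction.

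The only place requiring care is the inheritance of the Hall condition by the second subgraph in the last case, which rests on the counting identity $|N_G(Y) \setminus N(X_0)| = |N_G(Y \cup X_0)| - |N(X_0)|$; this in turn follows from $N(X_0) \subseteq N_G(Y \cup X_0)$, and everything else is bookkeeping. For completeness I note an alternative proof avoiding the case split: start from a maximum matching $M$, suppose some $v \in V_1$ is unsaturated, and let $A \cup B$ (with $A \subseteq V_1$, $B \subseteq V_2$) be the set of vertices reachable from $v$ along $M$-alternating paths; maximality of $M$ rules out augmenting paths, which forces $M$ to restrict to a bijection between $B$ and $A \setminus \{v\}$ and forces $N(A) = B$, whence $|N(A)| = |A| - 1 < |A|$, contradicting the hypothesis. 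I would present the inductive proof as the primary one.
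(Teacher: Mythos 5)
Your proof is correct, but there is nothing in the paper to compare it against: the paper states this result as Hall's marriage theorem and simply cites Hall (1935), using it only through its corollary that a finite regular bipartite graph has a perfect matching. Your inductive argument is the standard Halmos--Vaughan proof (case split on whether some nonempty proper subset of $V_1$ is tight, i.e.\ satisfies $|N(X_0)|=|X_0|$), and both the tight-set decomposition and the verification that the Hall condition passes to the two induced subgraphs are carried out correctly; the alternating-path sketch you add is also a valid alternative. One point worth flagging explicitly: as literally worded in the paper, the conclusion ``perfect matching'' is too strong --- Hall's condition on subsets of $V_1$ only yields a matching saturating $V_1$, which is exactly what you prove; a genuinely perfect matching requires in addition $|V_1|=|V_2|$. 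This discrepancy is harmless for the paper, since in the corollary actually used the bipartite graph is regular, so the two sides have equal size and a matching saturating $V_1$ is automatically perfect, but if your proof were spliced in as the proof of the theorem as stated, you should either note this or restate the theorem in the saturation form.
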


\begin{corollary}
\label{cor:Hall}
A finite regular bipartite graph has a perfect matching.
\end{corollary}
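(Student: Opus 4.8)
The plan is to derive the corollary directly from Hall's marriage theorem (Theorem~\ref{thm:Hall}). Let $G=(V_1\cup V_2,E)$ be $d$-regular; the case $d=0$ is trivial, so assume $d\geq 1$. First I would record that $G$ is balanced: counting the edges of $G$ by their endpoint in $V_1$ gives $\abs{E}=d\abs{V_1}$, and counting them by their endpoint in $V_2$ gives $\abs{E}=d\abs{V_2}$, whence $\abs{V_1}=\abs{V_2}$. Consequently, any matching that saturates $V_1$ is automatically a perfect matching, so it suffices to check that the hypothesis of Theorem~\ref{thm:Hall} holds.

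Next I would verify Hall's condition. Fix a subset $X\subseteq V_1$ and let $N(X)\subseteq V_2$ be the set of vertices adjacent to some vertex of $X$. Since $G$ is bipartite, $X$ contains no edges, so the number of edges incident to $X$ is exactly $d\abs{X}$; similarly the number of edges incident to $N(X)$ is exactly $d\abs{N(X)}$. Every edge incident to $X$ has its other endpoint in $N(X)$, so the first set of edges is contained in the second, giving $d\abs{X}\leq d\abs{N(X)}$. Dividing by $d$ yields $\abs{X}\leq\abs{N(X)}$, which is precisely the hypothesis of Theorem~\ref{thm:Hall}. Applying that theorem produces the required perfect matching.

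There is essentially no obstacle here; the argument is a short double-counting of edges followed by an invocation of Hall's theorem. The only point needing a moment's care is that a ``perfect'' matching really requires $\abs{V_1}=\abs{V_2}$, which is why the balancedness observation of the first step is needed and why full regularity (rather than a one-sided degree condition on $V_1$ alone) is used.
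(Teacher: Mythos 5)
Your proof is correct and is the standard derivation the paper implicitly relies on (the corollary is stated there without proof): double-count edges to get $\abs{V_1}=\abs{V_2}$ and to verify Hall's condition, then invoke Theorem~\ref{thm:Hall}. Your care about balancedness is apt, since the paper's phrasing of Hall's theorem already promises a ``perfect matching'' while its hypothesis only concerns $V_1$; the only nitpick is that the $d=0$ case is not so much trivial as vacuous (a $0$-regular graph with vertices has no perfect matching, but such graphs never arise in the paper's applications).
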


\subsection{Constructions for $1 < \omegaR \leq 2$}
\label{sec:s=2minus}

In this subsection we present constructions for PIR array codes when $\omegaR$
is a rational number greater than 1 and smaller than or equal to 2. The first construction will be generalized in Subsection~\ref{sec:rational} and Section~\ref{sec:Asingletons},
when $\omegaR$ is any rational number greater than~$1$, but the special case considered
here deserves separate attention for three reasons: it is simpler than
its generalization; the constructed PIR array code attains the bound
of Theorem~\ref{thm:up_1<s<2}, while we do not have a proof of a similar result
for the generalization; and finally the analysis of the generalization is
slightly different.

\begin{construction}{($\omegaR= 1 + d/t$ and $p= t+d$ for $t >1$, $d$ a positive integer, $1 \leq d \leq t$)}.
\label{con:1<s<2}

Let $\vartheta$ be the least common multiple of $d$ and $t$.
There are two types of servers. Servers of Type A store $t$ singletons.
Each possible $t$-subset of parts occurs $\vartheta/d$ times as the set
of singleton cells of a server, so there are $\binom{p}{t}\vartheta/d$
servers of Type A. Each server of Type B has $t-1$ singleton cells
in $t-1$ cells; the remaining cell stores the sum of the remaining $p-(t-1)=d+1$ parts.
Each possible $(t-1)$-set of singletons occurs $\vartheta/t$ times,
so there are $\binom{p}{t-1}\vartheta/t$ servers of Type B.
\end{construction}

\begin{theorem}
\label{thm:1<s<2}
When $t>1$ and $1 \leq d \leq t$,
\[
g(1+d/t,t) \geq \frac{(2d+1)t + d^2}{(t+d)(2d+1)}.
\]
\end{theorem}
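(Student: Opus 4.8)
The plan is to show that Construction~\ref{con:1<s<2} yields a $k$-PIR array code with $m$ servers and $k/m$ equal to the claimed bound, so that $g(1+d/t,t)$ is at least this value. First I would compute $m$, the total number of servers: it is $\binom{p}{t}\vartheta/d + \binom{p}{t-1}\vartheta/t$, where $p=t+d$. Using $\binom{p}{t}=\binom{p}{d}$ and $\binom{p}{t-1}=\binom{p}{d+1}$ together with the identity $\binom{p}{d+1}=\binom{p}{d}\cdot\frac{p-d}{d+1}=\binom{p}{d}\cdot\frac{t}{d+1}$, I would express both terms as a common multiple of $\binom{p}{d}\vartheta$ and simplify; the $\vartheta/d$ and $\vartheta/t$ factors are designed precisely so that the counts come out to be integers and combine cleanly.

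Next I would determine the largest $k$ for which the $k$-PIR property holds, i.e. for each part $x_i$ exhibit $k$ pairwise-disjoint subsets of servers whose spans each contain $x_i$. Fix $x_i$. The $\binom{p}{t}\vartheta/d$ Type A servers split into those whose singleton set contains $i$ and those whose singleton set avoids $i$; by a counting argument there are $\binom{p-1}{t-1}\vartheta/d$ of the former (call this $a_i$) and $\binom{p-1}{t}\vartheta/d$ of the latter. Similarly the Type B servers split into those whose $(t-1)$ singletons include $i$ (there are $\binom{p-1}{t-2}\vartheta/t$ such, call it $b_i$) and those whose singletons avoid $i$, of which there are $\binom{p-1}{t-1}\vartheta/t$; note that a Type B server avoiding $i$ has $i$ appearing in the sum-cell of the remaining $d+1$ parts. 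The $a_i + b_i$ servers each recover $x_i$ as a singleton cell, contributing that many size-$1$ subsets. The remaining servers must be paired up into size-$2$ subsets whose combined span contains $x_i$: a Type A server avoiding $i$ (whose cells span a $t$-subset of $\{x_1,\dots,x_p\}\setminus\{x_i\}$) can be paired with a Type B server avoiding $i$ whose sum-cell is $x_i + (\text{sum of } d \text{ parts all lying in that Type A server's singleton set})$, since then $x_i$ lies in the span of the two servers together. The key step is to verify that such a pairing exists and covers all remaining servers: I would set up a bipartite graph between the $\binom{p-1}{t}\vartheta/d$ Type A servers avoiding $i$ and the $\binom{p-1}{t-1}\vartheta/t$ Type B servers avoiding $i$, with an edge whenever the $d$ non-$x_i$ summands of the Type B server's sum-cell all appear as singletons on the Type A server, and show this graph is regular on both sides — so that by Corollary~\ref{cor:Hall} it has a perfect matching (after checking the two sides have equal size, which again is what the multiplicities $\vartheta/d$, $\vartheta/t$ are engineered to guarantee). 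Each matched pair gives one size-$2$ subset, and these subsets, together with the singleton subsets, are all pairwise disjoint.

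Counting disjoint subsets: we obtain $a_i + b_i$ singleton subsets plus one subset per matched pair, i.e. $\frac{1}{2}(\binom{p-1}{t}\vartheta/d + \binom{p-1}{t-1}\vartheta/t)$ pairs (the two sides being equal in size), for a total $k = a_i + b_i + \frac{1}{2}(\binom{p-1}{t}\vartheta/d + \binom{p-1}{t-1}\vartheta/t)$, and crucially this value is the same for every $i$. The final step is the arithmetic: substitute $p=t+d$, expand the binomial coefficients, and simplify $k/m$. I would expect this to collapse to $1 - \frac{d^2+d}{(t+d)(2d+1)}$, matching Theorem~\ref{thm:up_1<s<2} exactly. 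The main obstacle is the bipartite-regularity verification in the second paragraph — one has to argue carefully that every Type A server avoiding $i$ is incident to the same number of Type B servers avoiding $i$ (namely, the number of ways to choose $d$ of its $t$ singletons to be the summands, times the multiplicity $\vartheta/t$, divided appropriately) and vice versa, and that the two vertex classes have equal cardinality; once that is in place, Hall's theorem (Corollary~\ref{cor:Hall}) does the rest and the remaining work is bookkeeping.
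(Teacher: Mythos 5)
Your proposal is correct and follows essentially the same route as the paper's proof: the same count of Type A and Type B servers containing $x_i$ as a singleton, the same bipartite graph on the remaining servers (edge when the $d$ non-$x_i$ summands of the Type B sum-cell lie in the Type A singleton set), regularity plus equal side-sizes giving a perfect matching via Corollary~\ref{cor:Hall}, and the same final tally $k=m-\binom{t+d-1}{t}\vartheta/d$. The only cosmetic difference is that you write the number of matched pairs as half the total of the two sides rather than as $|V_1|$ directly, which is the same quantity once $|V_1|=|V_2|$ is checked.
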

\begin{proof}
The total number of servers in Construction~\ref{con:1<s<2} is
$m =\binom{t+d}{t} \vartheta/d + \binom{t+d}{d+1} \vartheta/t$.
We now calculate $k$ such that Construction~\ref{con:1<s<2} has the $k$-PIR property.
To do this, we compute for each $i$, $1 \leq i \leq p$, a collection
of pairwise disjoint sets of servers, each of which can recover the part~$x_i$.

There are $\binom{t+d-1}{t-1}\vartheta/d$ servers of Type A containing $x_i$ as
a singleton cell. Let $V_1$ be the set of $\binom{t+d-1}{t}\vartheta/d$ remaining
servers of Type A. There are $\binom{t+d-1}{t-2}\vartheta/t$ servers of Type B
containing $x_i$ as a singleton cell. Let $V_2$ be the set of $\binom{t+d-1}{t-1}\vartheta/t$ remaining servers of Type~B.

We define a bipartite graph $G=(V_1 \cup V_2 ,E)$ as follows.
Let $v_1\in V_1$ and $v_2\in V_2$. Let $X_1\subseteq \{x_1,x_2,\ldots,x_p\}$ be
the set of $t$ singleton cells of the server $v_1$. Let $X_2\subseteq \{x_1,x_2,\ldots,x_p\}$
be the parts involved in the non-singleton cell of the server $v_2$.
(So $X_2$ is the set of $d+1$ parts that are not singleton cells of $v_2$. Note that $x_i\in X_2$.)
We draw an edge from $v_1$ to $v_2$ exactly when $X_2\setminus\{x_i\}\subseteq X_1$.
Note that $v_1$ and $v_2$ are joined by an edge
if and only if the servers $v_1$ and $v_2$ can together recover $x_i$.

The degrees of the vertices in $V_1$ are all equal; the same is true for the vertices in $V_2$. Moreover,
$|V_1|=\binom{t+d-1}{t}\vartheta/d =
\binom{t+d-1}{t-1} \vartheta/t=|V_2|$.
So $G$ is a regular graph, and hence by Corollary~\ref{cor:Hall} there exists
a perfect matching in $G$. The  edges of this matching form $|V_1|$ disjoint pairs of servers, each of which can recover $x_i$.
Thus, we have that $k=\binom{t+d-1}{t-1}\vartheta/d + \binom{t+d-1}{t-2} \vartheta/t
+ \binom{t+d-1}{t} \vartheta/d = m - \binom{t+d-1}{t} \vartheta/d$.

Finally, some simple algebraic manipulation shows us that
\[
g(1+d/t,t) \geq \frac{k}{m} = \frac{(2d+1)t + d^2}{(t+d)(2d+1)}~.\qedhere
\]
\end{proof}

\begin{corollary}
\label{cor:1<s<2}
$~$
\begin{enumerate}
\item[\emph{(i)}] For any given $t$ and $d$, $1 \leq d \leq t$, when $\omegaR=1+d/t$ we have
\[
g(\omegaR,t) = 1 - \frac{d^2+d}{(t+d)(2d+1)}
=\frac{\omegaR+1+1/d}{(2+1/d)\omegaR}~.
\]
\item[\emph{(ii)}] For any rational number $1< \omegaR \leq 2$,  we have $g(\omegaR) = (\omegaR +1)/(2 \omegaR)$.
\item[\emph{(iii)}] $g(2,t) = (3t+1)/(4t+2)$.
\end{enumerate}
\end{corollary}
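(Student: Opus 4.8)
The plan is to derive Corollary~\ref{cor:1<s<2} directly from Theorems~\ref{thm:up_1<s<2} and~\ref{thm:1<s<2}, together with a short limiting argument for part~(ii). For part~(i), observe that when $\omegaR = 1 + d/t$ we have, from Theorem~\ref{thm:1<s<2}, the lower bound $g(\omegaR,t) \geq \frac{(2d+1)t + d^2}{(t+d)(2d+1)}$, and from Theorem~\ref{thm:up_1<s<2} (applicable since $t \geq 2$, and the case $d \leq t$ gives $\omegaR \leq 2$, though the upper bound holds for all positive integers $d$) the matching upper bound $g(\omegaR,t) \leq \frac{(2d+1)t + d^2}{(t+d)(2d+1)}$. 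Hence equality holds. The remaining task in~(i) is purely algebraic: rewrite $\frac{(2d+1)t + d^2}{(t+d)(2d+1)}$ first as $1 - \frac{d^2+d}{(t+d)(2d+1)}$ (clear denominators and check $(2d+1)(t+d) - ((2d+1)t + d^2) = d^2 + d$), and then substitute $t = d/(\omegaR - 1)$, so that $t + d = d\omegaR/(\omegaR-1)$ and $(2d+1)t + d^2 = d\big((2d+1) + d(\omegaR-1)\big)/(\omegaR-1) = d(d\omegaR + d + 1)/(\omegaR - 1)$; dividing gives $\frac{d\omegaR + d + 1}{\omegaR(2d+1)} = \frac{\omegaR + 1 + 1/d}{(2 + 1/d)\omegaR}$ after dividing numerator and denominator by $d$.

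For part~(iii), I would simply set $d = t$ in part~(i): then $\omegaR = 1 + t/t = 2$, and the formula $1 - \frac{d^2+d}{(t+d)(2d+1)}$ becomes $1 - \frac{t^2+t}{2t(2t+1)} = 1 - \frac{t+1}{2(2t+1)} = \frac{2(2t+1) - (t+1)}{2(2t+1)} = \frac{3t+1}{4t+2}$, as claimed. This requires checking that $d = t$ satisfies the hypothesis $1 \leq d \leq t$, which it does.

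For part~(ii), the idea is to combine the asymptotic upper bound $g(\omegaR) \leq (\omegaR+1)/(2\omegaR)$ from Theorem~\ref{thm:upper_bound} with a lower bound obtained by letting $t \to \infty$ along a suitable subsequence. Fix a rational $\omegaR$ with $1 < \omegaR \leq 2$ and write $\omegaR = 1 + a/b$ in lowest terms with $0 < a \leq b$. Then for every positive integer $\ell$, taking $t = \ell b$ and $d = \ell a$ gives $\omegaR = 1 + d/t$ with $1 \leq d \leq t$, so by part~(i), $g(\omegaR, \ell b) = \frac{\omegaR + 1 + 1/(\ell a)}{(2 + 1/(\ell a))\omegaR}$. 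As $\ell \to \infty$ this tends to $(\omegaR+1)/(2\omegaR)$, and since $g(\omegaR) = \overline{\lim}_{t\to\infty} g(\omegaR,t) \geq \limsup_{\ell\to\infty} g(\omegaR,\ell b) = (\omegaR+1)/(2\omegaR)$, the lower bound follows; combined with Theorem~\ref{thm:upper_bound} this gives equality.

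I do not anticipate a genuine obstacle here: the corollary is essentially a bookkeeping consequence of the two sandwiching theorems plus the $t\to\infty$ limit. The only mild subtlety is making sure the parametrisation $\omegaR = 1+d/t$ is consistent across the two theorems (Theorem~\ref{thm:up_1<s<2} allows any positive integer $d$, while Theorem~\ref{thm:1<s<2} requires $d \leq t$, which is exactly the range $1 < \omegaR \leq 2$ relevant to parts~(ii) and~(iii)), and double-checking the algebraic identity $\frac{(2d+1)t+d^2}{(t+d)(2d+1)} = \frac{\omegaR+1+1/d}{(2+1/d)\omegaR}$ — both of which are routine.
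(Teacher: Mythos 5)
Your proposal is correct and is exactly the derivation the paper intends: the corollary is stated without proof as an immediate consequence of sandwiching Theorem~\ref{thm:up_1<s<2} against Theorem~\ref{thm:1<s<2}, with Theorem~\ref{thm:upper_bound} plus the $t\to\infty$ subsequence $t=\ell b$ giving part~(ii). Your algebraic verifications and the choice of subsequence are all sound.
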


\begin{construction}{($\omegaR= 1 + d/t$, $p= t+d$, and there exists a
Steiner system $S(d,d+1,p)$)}
\label{con:t_t+d}
$~$

Let $\mathcal{S}$ be a $S(d,d+1,p)$ Steiner system on the set of points $\{1,2,\ldots,p\}$.
We define servers of two types. There are $\binom{t+d}{t}=\binom{t+d}{d}$ servers of Type A:
each server stores a different subset of parts in $t$ singleton cells.
There are $\frac{d}{d+1} \binom{t+d}{d}$ servers of Type B, indexed by
a set that repeats each of the $\frac{1}{d+1} \binom{t+d}{d}$ blocks
$B\in\mathcal{S}$ a total of $d$ times. One cell in a server of Type B
contains the sum $\sum_{i\in B}x_i$; the remaining $t-1$ cells contain
the $t-1$ parts not involved in this sum.
\end{construction}

The PIR rate of Construction~\ref{con:t_t+d} attains the upper bound of Theorem~\ref{thm:up_1<s<2}
using fewer servers than in Construction~\ref{con:1<s<2}.
Unfortunately, Construction~\ref{con:t_t+d} can be applied on a limited number of parameters
since the number of possible Steiner systems of this type is limited,
and the number of known ones is even smaller.

\subsection{Constructions when $\omegaR>2$ is rational}
\label{sec:rational}

We do not know the exact value of the asymptotic rate $g(\omegaR,t)$ of PIR codes when $\omegaR>2$.
These values will be considered in this subsection. We present
only the bounds implied by the constructions given in~\cite{BlEt16}.
In all the constructions there exist servers with fewer than $t-1$ singletons.

\begin{theorem}
\label{thm:s=rational}
For given $t$, $d$ and $r$ with $r >1$, with $r \leq t$, and with $1 \leq d \leq t-1$,
$$
g(r + d/t,t) \geq \frac{(rt+d)(rt+d) -t(t-r)}{(rt+d)(2rt+2d-2t+r)} ~.
$$
\end{theorem}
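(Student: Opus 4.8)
The plan is to generalise Construction~\ref{con:1<s<2} to the regime $\omegaR = r + d/t$ with $r \geq 2$. The key structural change is that here each server can afford to store the database fraction $1/\omegaR$ in a way that leaves fewer than $t-1$ singletons: when $\omegaR$ is large, a ``Type B'' server can hold more than one non-singleton cell, or a single non-singleton cell involving many more than $d+1$ parts. I would set $p = rt + d$ (so that $t = p/\omegaR$ as required), and again define two types of servers: Type A servers storing $t$ singletons, with each $t$-subset of the $p$ parts occurring a fixed number of times; and Type B servers, each storing $t-1$ singletons in $t-1$ cells and, in the remaining cell, the sum of the $p - (t-1)$ parts not appearing as singletons. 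Wait --- with $p = rt+d$ and only $t-1$ singleton cells, the non-singleton cell would have to encode $rt+d-(t-1) = (r-1)t + d + 1$ parts in one linear combination, which is fine for recovery purposes but changes the counting; more likely the intended construction lets Type B servers carry several non-singleton cells, and the multiplicities are chosen (via an lcm of the relevant binomials, as $\vartheta$ was the lcm of $d$ and $t$) so that all parts are treated symmetrically.

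Once the construction is fixed, the computation of $k$ proceeds exactly as in the proof of Theorem~\ref{thm:1<s<2}. For each part $x_i$, I would partition the servers into: (a) Type A servers holding $x_i$ as a singleton, each forming a recovery set of size~$1$; (b) Type B servers holding $x_i$ as a singleton, each a recovery set of size~$1$; (c) the remaining Type A and Type B servers, which must be paired up so that in each pair the singletons of the Type A server supply exactly the missing parts needed to complete the recovery of $x_i$ from the Type B server's non-singleton cell(s). The crux is that this pairing exists: I would build the bipartite graph $G = (V_1 \cup V_2, E)$ with $V_1$ the leftover Type A servers and $V_2$ the leftover Type B servers, an edge whenever the Type A singletons cover the parts that the Type B server is missing (besides $x_i$), verify that $G$ is biregular and that $|V_1| = |V_2|$ by a binomial identity, and invoke Corollary~\ref{cor:Hall} to get a perfect matching. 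Then $k$ equals the number of size-$1$ recovery sets plus the number of matched pairs, i.e. $k = m - |V_1|$, and $k/m$ simplifies to the claimed expression $\frac{(rt+d)^2 - t(t-r)}{(rt+d)(2rt+2d-2t+r)}$ after algebra.

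**The main obstacle** I anticipate is precisely the structure of the Type B servers and the verification that the associated bipartite graph is regular with $|V_1|=|V_2|$. For $1 < \omegaR \leq 2$ the Type B server had a single non-singleton cell summing $d+1$ parts, and the graph regularity followed from the clean identity $\binom{t+d-1}{t}/d = \binom{t+d-1}{t-1}/t$. With $\omegaR > 2$ the non-singleton content is larger (or there are several non-singleton cells), so the ``missing set'' that a Type A partner must cover is bigger, the degrees in $G$ involve more intricate binomial coefficients, and the equality $|V_1| = |V_2|$ will hinge on getting the Type A and Type B multiplicities exactly right --- this is where the more elaborate lcm-style normalisation enters, and where I would expect to spend most of the effort. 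A secondary subtlety is that, as the paper notes, some Type B servers will genuinely have fewer than $t-1$ singletons, so one must be careful that every part $x_i$ is still recoverable from a pure union-of-cells argument and that no recovery set is accidentally reused across the disjoint families for $x_i$. Once the combinatorial bookkeeping is pinned down, the remainder is routine algebraic simplification of $k/m$, analogous to the final line of the proof of Theorem~\ref{thm:1<s<2}.
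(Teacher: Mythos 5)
There is a genuine gap: the concrete construction you commit to first (Type B servers with $t-1$ singleton cells plus one cell containing the sum of the remaining $(r-1)t+d+1$ parts) cannot work once $r\geq 2$, and the hedge you offer in its place is never made precise. If $x_i$ appears in a Type B server only inside a sum of $(r-1)t+d+1$ parts, then recovering $x_i$ requires the other $(r-1)t+d$ summands as singletons elsewhere in the recovery set; the Type B server's own $t-1$ singletons are by construction disjoint from the sum, so they contribute nothing, and a single Type A partner supplies only $t$ singletons, while $(r-1)t+d\geq t+1$ whenever $r\geq 2$ and $d\geq 1$. Hence the pairs produced by your matching do not recover $x_i$ at all, the identity $k=m-|V_1|$ collapses, and patching it by using groups of about $r$ Type A partners per Type B server yields a strictly worse rate than claimed. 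The paper itself flags this: it says explicitly that in the constructions of this subsection there exist servers with \emph{fewer} than $t-1$ singletons, which rules out your primary Type B design.

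The missing idea --- which is essentially the entire content of the theorem, and is the reason the hypothesis $r\leq t$ appears --- is to make every non-singleton cell of a Type B server a sum of at most $t+1$ parts, so that the at most $t$ missing summands fit inside the $t$ singletons of a \emph{single} Type A partner. Reverse-engineering the stated bound shows what is needed: with $p=rt+d$, a Type B server should have $\sigma=t-r$ singleton cells and $r$ cells whose sums partition the remaining $(r-1)t+d+r$ parts into blocks of size at most $t+1$; then the matching argument you outline (Type A servers holding $x_i$ as singleton recovery sets of size one, leftover Type A servers matched to Type B servers in which $x_i$ sits in a sum, with multiplicities chosen to make the bipartite graph regular) gives rate $\frac{p^2-t\sigma}{p(2p-t-\sigma)}=\frac{(rt+d)^2-t(t-r)}{(rt+d)(2rt+2d-2t+r)}$, e.g.\ $23/35$ for $r=2$, $d=1$, $t=3$ as quoted later in the paper. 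Without specifying this cell structure and the multiplicities, the final ``routine algebra'' has nothing to act on; as written the proposal does not establish the theorem. (The conference version defers the proof to the full paper, so the construction really is the substance here.)
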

Combining Theorems~\ref{thm:upper_bound} and~\ref{thm:s=rational} we have:
\begin{corollary}
\label{cor:s=rational}
If $\omegaR > 2$ is a rational number which is not an integer, then $g(\omegaR)= (\omegaR+1)/(2 \omegaR)$.
\end{corollary}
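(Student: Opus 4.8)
The plan is to derive Corollary~\ref{cor:s=rational} by combining the general upper bound $g(\omegaR) \leq (\omegaR+1)/(2\omegaR)$ from Theorem~\ref{thm:upper_bound} with a matching asymptotic lower bound extracted from Theorem~\ref{thm:s=rational}. First I would fix a rational number $\omegaR > 2$ that is not an integer, and write $\omegaR = r + a/b$ with $r = \lfloor \omegaR \rfloor \geq 2$ and $a/b$ a fraction in lowest terms with $0 < a < b$. The key observation is that for any multiple $t = bm$ of $b$ we have $\omegaR = r + (am)/t$ with $d := am$ an integer satisfying $1 \leq d \leq t-1$ once $m$ is large enough (since $a/b < 1$ forces $am < bm = t$, and we only need $m \geq 1$ to get $d \geq 1$); and as $m \to \infty$ we obtain an infinite sequence of admissible values of $t$ with $\omegaR$ held fixed. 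Hence $g(\omegaR) = \overline{\lim}_{t\to\infty} g(\omegaR,t)$ is bounded below by $\overline{\lim}_{m\to\infty}$ of the right-hand side of Theorem~\ref{thm:s=rational} evaluated at $(t,d) = (bm, am)$.

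The second step is the asymptotic computation: substitute $t = bm$, $d = am$ into
\[
\frac{(rt+d)(rt+d) - t(t-r)}{(rt+d)(2rt + 2d - 2t + r)}
\]
and let $m \to \infty$. The numerator and denominator are both quadratic in $m$ with the linear and constant terms (the $t(t-r)$ correction contributes an $O(m)$ term, and the $+r$ in the denominator is $O(1)$), so in the limit only the leading $m^2$ coefficients survive. Writing $rt + d = (rb+a)m = \omegaR b m$, the leading term of the numerator is $(\omegaR b m)^2 - (bm)^2 = b^2 m^2(\omegaR^2 - 1)$ and the leading term of the denominator is $(\omegaR b m)(2\omegaR b m - 2bm) = 2 b^2 m^2 \omegaR(\omegaR - 1)$. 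The ratio therefore tends to $(\omegaR^2-1)/(2\omegaR(\omegaR-1)) = (\omegaR+1)/(2\omegaR)$, which is exactly the upper bound. Combining the two inequalities gives $g(\omegaR) = (\omegaR+1)/(2\omegaR)$.

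I expect the main obstacle to be bookkeeping rather than conceptual: one must check carefully that the parameter constraints of Theorem~\ref{thm:s=rational} — namely $r > 1$, $r \leq t$, and $1 \leq d \leq t-1$ — are all met along the chosen sequence. The constraint $r \leq t = bm$ is automatic for $m$ large, $d = am \geq 1$ holds since $a \geq 1$, and $d = am \leq t - 1 = bm - 1$ holds since $am \leq bm - 1$ is equivalent to $(b-a)m \geq 1$, true because $b > a$. The only subtlety is that Theorem~\ref{thm:s=rational} is stated with $\omegaR$ written as $r + d/t$ where the fraction $d/t$ need not be in lowest terms, so I should remark that this causes no difficulty: we are free to clear denominators by the factor $m$, and the theorem applies verbatim to each admissible $(t,d)$. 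A minor secondary point is confirming that $\overline{\lim}$ (rather than $\lim$) suffices for the lower bound — but since the sequence of lower bounds converges, its $\overline{\lim}$ equals its ordinary limit, so the bound on $g(\omegaR)$ follows immediately. No further machinery beyond Theorems~\ref{thm:upper_bound} and~\ref{thm:s=rational} is required.
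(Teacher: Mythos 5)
Your proposal is correct and follows exactly the route the paper intends: the paper derives this corollary by the one-line remark ``Combining Theorems~\ref{thm:upper_bound} and~\ref{thm:s=rational},'' and your argument is precisely that combination, with the substitution $t=bm$, $d=am$ and the limit computation $(\omegaR^2-1)/(2\omegaR(\omegaR-1))=(\omegaR+1)/(2\omegaR)$ carried out explicitly and the parameter constraints of Theorem~\ref{thm:s=rational} duly checked. No gaps.
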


\begin{theorem}
\label{thm:s=integer}
For any given integers $\omegaR \geq 2$ and $t \geq \omegaR$,
$$
g(\omegaR,t) \geq \frac{\omegaR t+t+1}{\omegaR (2t+1)}=1 - \frac{(\omegaR -1)(t+1)}{\omegaR (2t+1)} ~.
$$
\end{theorem}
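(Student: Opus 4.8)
The plan is to mimic the two–type construction of Construction~\ref{con:1<s<2}, but with the sum–cell of each Type~B server involving $t+1$ parts rather than $d+1$ parts, so that the parameter $p = \omegaR t$ forces a different combinatorial bookkeeping. Concretely, for integers $\omegaR\geq 2$ and $t\geq\omegaR$, set $p=\omegaR t$. I would use Type~A servers, each of which stores some $t$-subset of the $p$ parts as singletons in its $t$ cells, and Type~B servers, each of which stores $t-1$ singletons in $t-1$ cells and, in the remaining cell, the sum of some $(p-(t-1))$-subset of parts; here $p-(t-1) = \omegaR t - t + 1 = (\omegaR-1)t+1$. To make the counting balance (so that the relevant bipartite graphs below are regular) I would take each $t$-subset to occur as the singleton set of a Type~A server a fixed number $\lambda_A$ of times, and each $(t-1)$-subset of singletons to occur a fixed number $\lambda_B$ of times among the Type~B servers; $\lambda_A,\lambda_B$ are chosen as the appropriate quotients of a common multiple, exactly as $\vartheta/d$ and $\vartheta/t$ were chosen in Construction~\ref{con:1<s<2}.

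Next I would fix a part $x_i$ and exhibit disjoint server-subsets recovering it. As before, the Type~A servers containing $x_i$ as a singleton and the Type~B servers containing $x_i$ as a singleton each contribute singleton recovery sets $\{v\}$. The remaining Type~A servers (call this set $V_1$) and the remaining Type~B servers whose sum-cell involves $x_i$ (call this set $V_2$) would be paired up: a Type~B server $v_2\in V_2$ whose sum-cell involves the $(\omegaR-1)t+1$ parts in a set $X_2\ni x_i$ can recover $x_i$ together with any Type~A server whose singleton set $X_1$ contains $X_2\setminus\{x_i\}$; note $|X_2\setminus\{x_i\}| = (\omegaR-1)t \le t\cdot t/\omegaR \cdot \omegaR$... more carefully, $X_2\setminus\{x_i\}$ must fit inside a $t$-set $X_1$, and since $|X_2\setminus\{x_i\}|=(\omegaR-1)t$ this requires $(\omegaR-1)t\le t$, i.e. $\omegaR\le 2$. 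So for $\omegaR>2$ a single Type~A server is not enough and the matching must be replaced by a hypergraph-matching argument pairing each Type~B server in $V_2$ with $\lceil (\omegaR-1)t/t\rceil=\omegaR-1$ Type~A servers that jointly cover $X_2\setminus\{x_i\}$. I would build an auxiliary regular bipartite-type incidence structure and invoke Hall's Theorem (Theorem~\ref{thm:Hall}) / Corollary~\ref{cor:Hall} to extract such a system of disjoint $(\omegaR-1+1)$-tuples, after first checking the relevant degree-regularity via the binomial identities $\binom{p-1}{t}\lambda_A$ versus $\binom{p-1}{t-1}\lambda_B$ and the like. The value of $k$ is then $m$ minus the number of Type~A servers left unused, and the final inequality $g(\omegaR,t)\ge (\omegaR t+t+1)/(\omegaR(2t+1))$ follows by the same kind of algebraic simplification as in the proof of Theorem~\ref{thm:1<s<2}.

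The main obstacle is the matching step when $\omegaR>2$: a perfect matching in a bipartite graph no longer suffices, and one must argue that the Type~A servers in $V_1$ can be partitioned into groups of size $\omegaR-1$ so that each group jointly contains the $(\omegaR-1)t$-element "tail" of a distinct Type~B server's sum-cell, with all these groups and Type~B servers pairwise disjoint. Getting the repetition multiplicities $\lambda_A,\lambda_B$ exactly right so that the counts match (so that no Type~A server is wasted beyond the unavoidable $\binom{p-1}{t}\lambda_A$ residue, which is what produces the $-(\omegaR-1)(t+1)/(\omegaR(2t+1))$ defect) is the delicate part; I expect this to reduce, after choosing $\lambda_A,\lambda_B$ proportional to suitable factorial ratios, to verifying one polynomial identity in $\omegaR$ and $t$, together with a regularity check that lets Corollary~\ref{cor:Hall} apply to the auxiliary incidence structure. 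Everything else — the singleton recovery sets, the bound $k=m-\binom{p-1}{t}\lambda_A$, and the closing computation — is routine.
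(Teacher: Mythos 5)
There is a genuine gap, and it is quantitative rather than merely technical: the two-type architecture you propose cannot reach the claimed rate once $\omegaR\geq 3$. As you yourself observe, a Type~B server whose sum involves $x_i$ must be grouped with $\omegaR-1$ Type~A servers (one Type~A server supplies only $t$ singletons, while $(\omegaR-1)t$ parts must be cancelled), so these recovery sets have size $\omegaR$ rather than $2$. Count what this consumes: a fraction $1/\omegaR$ of the $a$ Type~A servers contain $x_i$; a fraction $(t-1)/(\omegaR t)$ of the $b$ Type~B servers have $x_i$ as a singleton; a fraction $((\omegaR-1)t+1)/(\omegaR t)$ of the Type~B servers have $x_i$ in their sum, and each such server eats $\omegaR-1$ of the $a(\omegaR-1)/\omegaR$ Type~A servers avoiding $x_i$. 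Optimising over $b/a$ gives at best
\[
\frac{k}{m}\;\leq\;\frac{(2\omegaR-1)t+1}{\omegaR(\omegaR t+1)},
\]
which is strictly smaller than $\frac{(\omegaR+1)t+1}{\omegaR(2t+1)}$ for every $t$ precisely when $(\omegaR-1)(\omegaR-2)>0$, i.e.\ for all $\omegaR\geq 3$ (asymptotically $(2\omegaR-1)/\omegaR^2$ versus $(\omegaR+1)/(2\omegaR)$). So no choice of the multiplicities $\lambda_A,\lambda_B$ and no hypergraph-matching refinement can rescue the closing computation; your argument is sound only for $\omegaR=2$, where it degenerates to Construction~\ref{con:1<s<2} with $d=t$.

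The paper defers the actual proof to the full version, but it tells you the architecture must differ from yours: Section~\ref{sec:rational} states that in all the constructions behind Theorems~\ref{thm:s=rational} and~\ref{thm:s=integer} there exist servers with \emph{fewer} than $t-1$ singletons, whereas every server in your construction has at least $t-1$ singletons. The underlying reason is that to approach the bound $1/2+1/(2\omegaR)$ of Theorem~\ref{thm:upper_bound} essentially all non-singleton recovery sets must be \emph{pairs}, so the helper paired with a server holding a long sum containing $x_i$ must deliver the $(\omegaR-1)t$ missing parts within its $t$ cells, i.e.\ as sums rather than singletons. (The alternative that keeps all servers singleton-heavy is the ladder of $\omegaR$ types in Section~\ref{sec:Asingletons}, where a T$_r$ server's $t-1$ singletons together with its sum of $(r-1)t+1$ parts exactly strip a T$_{r+1}$ server's sum of $rt+1$ parts down to $x_i$; but that yields the stronger bound of that section, not the formula of Theorem~\ref{thm:s=integer}.) Either way, an idea beyond two types of singleton-heavy servers is required.
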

Combining Theorems~\ref{thm:upper_bound} and~\ref{thm:s=integer} we have:
\begin{corollary}
\label{cor:s=integer}
For any given integer $\omegaR > 2$, $g(\omegaR) = \frac{\omegaR +1}{2 \omegaR }$.
\end{corollary}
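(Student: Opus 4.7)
The plan is to derive the corollary as a direct sandwich of the upper bound in Theorem~\ref{thm:upper_bound} and the lower bound in Theorem~\ref{thm:s=integer}, so no new construction or combinatorial argument is needed.

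For the upper direction, Theorem~\ref{thm:upper_bound} already shows that $g(\omegaR) \leq (\omegaR+1)/(2\omegaR)$ for \emph{every} rational $\omegaR > 1$; in particular this applies to every integer $\omegaR > 2$. So the only work is to match this bound from below when $\omegaR$ is an integer greater than~$2$.

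For the lower direction, I would appeal to Theorem~\ref{thm:s=integer}, which provides a sequence of explicit PIR array codes of rate
\[
g(\omegaR,t) \;\geq\; 1 - \frac{(\omegaR-1)(t+1)}{\omegaR(2t+1)}
\]
valid for every integer $t \geq \omegaR$. Since by definition $g(\omegaR) = \overline{\lim}_{t\to\infty} g(\omegaR,t)$, it is enough to take the $\overline{\lim}$ (equivalently, just the limit, since the right-hand side converges) as $t \to \infty$. A short calculation gives $(t+1)/(2t+1) \to 1/2$, hence
\[
g(\omegaR) \;\geq\; 1 - \frac{\omegaR-1}{2\omegaR} \;=\; \frac{\omegaR+1}{2\omegaR}.
\]
Combining with the upper bound forces equality.

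There is no real obstacle to overcome here: the entire content of the corollary lies in Theorems~\ref{thm:upper_bound} and~\ref{thm:s=integer}, and the combination is a routine passage to the limit in $t$. The one small point to watch is the direction of the inequality inside the $\overline{\lim}$: because the right-hand side of the lower bound actually converges (rather than oscillates), its limit is a legitimate lower bound for $\overline{\lim}_{t\to\infty} g(\omegaR,t)$, and the argument closes cleanly.
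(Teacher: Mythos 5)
Your proposal is correct and is exactly the paper's derivation: the corollary is stated as an immediate consequence of combining the upper bound of Theorem~\ref{thm:upper_bound} with the lower bound of Theorem~\ref{thm:s=integer} and letting $t\to\infty$. The limit computation $1-\frac{(\omegaR-1)(t+1)}{\omegaR(2t+1)}\to\frac{\omegaR+1}{2\omegaR}$ is right, so nothing further is needed.
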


All the results we obtained are for $t \geq \omegaR -1$. The next theorem can be applied for $t < \omegaR-1$.
\begin{theorem}
If $c$, $\omegaR$, $t$ are integers such that $1 \leq c \leq t-1$ and
$2^{c-1} t - 2^{c-1} (c-2) +1 \leq \omegaR \leq 2^c t - 2^c (c-1)$,
then $g(\omegaR,t) \geq \frac{t-c +(t-1) \omegaR+1}{t-c +2(t-1)\omegaR +2}$.
\end{theorem}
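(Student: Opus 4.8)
The plan is to build the $k$-PIR array code by a recursive/iterated version of Construction~\ref{con:1<s<2}, using the integer $c$ as the number of iterations. The base case $c=1$ is essentially Construction~\ref{con:1<s<2} (with $d=1$), which handles $t \le \omegaR \le 2t-1$; here the rate is $\frac{t - 1 + (t-1)\omegaR + 1}{t - 1 + 2(t-1)\omegaR + 2}$, matching the claimed formula at $c=1$. So the first step is to verify that the stated range of $\omegaR$ for a given $c$, namely $2^{c-1}t - 2^{c-1}(c-2) + 1 \le \omegaR \le 2^c t - 2^c(c-1)$, is exactly what one obtains after $c$ doubling steps starting from the interval $[t, 2t-1]$; this is a routine but essential bookkeeping check that also motivates the exact endpoints.

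Next I would set up the iteration. At each stage we have servers of two types: ``full'' servers holding $t$ singletons, and ``combination'' servers holding $t-1$ singletons together with one cell storing a sum of the remaining parts. The key move, as in the passage from Construction~\ref{con:1<s<2} to the constructions of Section~\ref{sec:Asingletons}, is that the single combined cell can itself recurse: instead of storing the literal sum of all $d+1$ leftover parts, we store a linear combination dictated by a smaller PIR array code on those $d+1$ parts. Iterating $c$ times lets a cell ``cover'' a number of parts that grows like $2^{c-1}(t-1)$, which is precisely why the upper endpoint of the feasible $\omegaR$-range scales with $2^c$. At each level I would choose multiplicities (least common multiples of the relevant binomial-type counts) so that, after fixing a target part $x_i$, the bipartite ``recovery'' graph between the as-yet-unused full servers and the combination servers that need a partner is regular, so Corollary~\ref{cor:Hall} supplies a perfect matching. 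Counting disjoint recovering sets for $x_i$ then gives $k = m - (\text{number of full servers left over})$, exactly as in the proof of Theorem~\ref{thm:1<s<2}.

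The rate computation is then a single algebraic identity: expressing $m$ and the leftover count in terms of $t$, $c$, and $\omegaR$, one should find $k/m = \frac{t-c + (t-1)\omegaR + 1}{t-c + 2(t-1)\omegaR + 2}$. I would present this as ``a straightforward (if tedious) manipulation'', showing only the shape of the terms — the $t-c$ arises because $c$ of the $t$ cells per server are consumed by the $c$ levels of recursion, and the $(t-1)\omegaR$ term records how many parts the remaining structure is responsible for.

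The main obstacle will be the matching step at the inner levels of the recursion: when a combination cell itself encodes a smaller PIR array code rather than a plain sum, one must check that the recovery sets for $x_i$ produced at different recursion levels are genuinely pairwise disjoint as sets of servers, and that the associated bipartite graphs are still regular after the multiplicities have been propagated through all $c$ levels. Getting the multiplicities to line up simultaneously at every level — i.e. choosing one global repetition factor (the lcm of all the level-wise binomial counts) that makes every auxiliary graph regular at once — is the delicate point; everything else is counting. I would isolate this as a lemma (``for the stated choice of multiplicities, each level-$j$ recovery graph is regular'') and then assemble the bound.
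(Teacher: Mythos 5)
There is a genuine gap here: your proposal never produces an actual construction, and its starting point is misidentified. Construction~\ref{con:1<s<2} (and indeed every construction in Sections~\ref{sec:constructions} and~\ref{sec:Asingletons}) lives in the regime where $\omegaR$ is close to $1$ or at least satisfies $t\geq\omegaR-1$: Construction~\ref{con:1<s<2} requires $1<\omegaR\le 2$, and taking $d=1$ there gives $\omegaR=1+1/t$. The theorem you are proving is explicitly for the opposite regime $t<\omegaR-1$. At $c=1$ its hypothesis reads $t+2\le\omegaR\le 2t$ (not $t\le\omegaR\le 2t-1$ as you state), so even your base case is not handled by the construction you propose to recurse on. The interval bookkeeping also does not check out: the hypothesis describes the intervals $[2^{c-1}(t-c+2)+1,\,2^c(t-c+1)]$, which are consecutive as $c$ increases but are not obtained by ``$c$ doubling steps starting from $[t,2t-1]$''; this is not a routine verification you can defer, because whatever mechanism produces the factor $2^c$ (presumably something in the spirit of the cube construction behind Theorem~\ref{thm:t=1}) is exactly what your argument must supply.

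Second, the central device --- ``the combined cell can itself recurse: instead of storing the literal sum \ldots we store a linear combination dictated by a smaller PIR array code'' --- is not a construction. A cell holds a single linear combination; a PIR array code on the leftover parts is an entire array of them, and you never specify which combination is placed in which cell of which server, what the multiplicities are, why each level-$j$ recovery graph is regular, or why recovery sets produced at different levels are disjoint. To obtain the claimed rate $\frac{t-c+(t-1)\omegaR+1}{t-c+2(t-1)\omegaR+2}$ one must exhibit $m$ and $k$ with $m-k$ proportional to $(t-1)\omegaR+1$ and show where the hypothesis $2^{c-1}(t-c+2)+1\le\omegaR\le 2^c(t-c+1)$ is actually used; all of that is precisely the part you defer as ``a straightforward (if tedious) manipulation.'' As written, the proposal cannot be completed without inventing the construction from scratch.
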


\section{Servers with at least $t-1$ Singletons}
\label{sec:Asingletons}

All the lower bounds described above can be improved with
a construction which generalizes Constrution~\ref{con:1<s<2}.
This general construction can be applied for all admissible pairs $(\omegaR ,t)$.
For simplicity we will define and demonstrate it first for integer values of $\omegaR$ and later
explain the modification needed for non-integer values of~$\omegaR$.

The construction uses $\omegaR$ ($\lceil \omegaR \rceil$ if
not an integer) types of servers.
Type T$_r$, $1 \leq r \leq \omegaR$, has $t-1$ singleton cells and one cell
with a sum of $(r-1)t+1$ parts. For each type, all possible combinations of parts and sums are taken
the same amount of times: $\eta_r$ times for Type T$_r$. Therefore, the number of servers in Type T$_1$ is $\eta_1\binom{\omegaR t}{t}$ and the number of servers in Type T$_r$, $2 \leq r \leq \omegaR$,
is $\eta_r\binom{\omegaR t}{t-1} \binom{\omegaR t-t+1}{(r-1)t+1}$. A part $x_i$ is recovered from all
the singleton cells, where it appears,
and also by pairing servers as follows. We construct $\omegaR-1$ bipartite graphs,
where bipartite graph~$r$, $G_r$, $1 \leq r \leq \omegaR -1$, has two sides. The first side represents all the
servers of Type T$_r$ in which $x_i$ is neither a singleton nor in a sum with other parts.
The second side represents all the servers of Type T$_{r+1}$ in which $x_i$ participates in a sum with other parts.
There is an edge between vertex $v$ of the first side and vertex $u$ of the second side if the $t-1$ singleton parts
in $v$, and the $(r-1)t+1$ parts of the sum in the last cell of $v$ are the $rt$ parts in the sum of the
last cell of $u$, excluding $x_i$. We choose the constants~$\eta_r$ so that these bipartite graphs will all be all regular. Edges in a perfect matching of these graphs correspond to pairs of servers that can together recover $x_i$.

We start with a general solution for $\omegaR=3$ to show that this method is much better than
the previous ones. For $\omegaR=3$ there are three types of servers T$_1$, T$_2$, and~T$_3$.

In Type T$_1$, each server has $t$ singletons. There are $\binom{3t-1}{t-1}$ combinations
in which $x_i$ is a singleton and $\binom{3t-1}{t}$ combinations in which $x_i$ is not
a singleton. Each combination will appear in $\eta_1=\binom{2t-1}{t-1}$ servers of Type~T$_1$.

In Type T$_2$, each server has $t-1$ singletons and one cell with a sum
of $t+1$ parts. There are $\binom{3t-1}{t-2} \binom{2t+1}{t}$ combinations in which $x_i$ is a singleton,
$\binom{3t-1}{t-1} \binom{2t}{t}$ combinations in which $x_i$ is in a sum of $t+1$ parts, and
$\binom{3t-1}{t-1} \binom{2t}{t-1}$ combinations in which $x_i$ is neither a singleton nor in
a sum of $t+1$ parts. Each combination will appear in exactly one server of Type T$_2$, so $\eta_2=1$.

In Type T$_3$, each server has $t-1$ singletons and one cell with a sum
of $2t+1$ parts. Hence, each part appears in each server either as a singleton or in a sum
of $2t+1$ parts. There are $\binom{3t-1}{t-2}$ combinations in which $x_i$ is a singleton,
and $\binom{3t-1}{t-1}$ combinations in which $x_i$ is in a sum of $2t+1$ parts.
Each combination will appear in $\eta_3=8 \binom{2t}{t-1}$ servers of Type T$_3$.

Now, we can form the two bipartite graphs and apply Corollary~\ref{cor:Hall} to find the
pairs from which $x_i$ can be recovered. We may calculate that the rate of the code is $\frac{16 t^2 +7t +1}{24t^2 +15t +3}$,
which is much better than the rate of $\frac{4t+1}{6t+3}$ implied by Theorem~\ref{thm:s=integer}.
Hence, we have

\begin{theorem}
\label{thm:s=3}
$$
g(3,t) \geq \frac{16 t^2 +7t +1}{24t^2 +15t +3}~.
$$
\end{theorem}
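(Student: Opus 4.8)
The plan is to analyse the $\omegaR=3$ construction described above and to compute its PIR rate exactly. Summing the three type sizes, the total number of servers is
\[
m=\eta_1\binom{3t}{t}+\eta_2\binom{3t}{t-1}\binom{2t+1}{t+1}+\eta_3\binom{3t}{t-1}.
\]
Fix a part $x_i$. I would exhibit pairwise disjoint sets of servers, each recovering $x_i$, of three kinds: (i) every server of any type that holds $x_i$ in one of its singleton cells; (ii) the pairs given by the edges of a perfect matching of the bipartite graph $G_1$, whose sides are the Type~$T_1$ servers avoiding $x_i$ and the Type~$T_2$ servers having $x_i$ in their sum; and (iii) the pairs given by the edges of a perfect matching of $G_2$, whose sides are the Type~$T_2$ servers avoiding $x_i$ and the Type~$T_3$ servers having $x_i$ in their sum. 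An edge of $G_r$ joins a Type~$T_r$ server $v$ to a Type~$T_{r+1}$ server $u$ exactly when the parts occurring in $v$ are precisely the parts of the sum in $u$ other than $x_i$; such a pair spans $x_i$, since subtracting the cell values of $v$ from the sum of $u$ leaves $x_i$. So the first step is to count, in binomial coefficients, both side-sizes of $G_1$ and of $G_2$, and the number of singleton cells equal to $x_i$.

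The crux is to verify that $G_1$ and $G_2$ are regular bipartite graphs whose two sides have equal cardinality, so that \Cref{cor:Hall} supplies a perfect matching in each. Regularity is short: an edge of $G_r$ forces the sum of its Type~$T_{r+1}$ endpoint, and hence the full list of parts of both endpoints, to be determined, so a vertex's degree is just a fixed binomial coefficient times the relevant multiplicity $\eta_r$, independent of the vertex. The equality of the two side-sizes is precisely what fixes the multiplicities $\eta_1$ and $\eta_3$ once $\eta_2=1$ is chosen, and it reduces to the elementary identities $\binom{3t-1}{t}=2\binom{3t-1}{t-1}$ and $\binom{2t}{t}=2\binom{2t-1}{t-1}$. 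Granting this, the number of pairs from $G_1$ equals the number of Type~$T_1$ servers avoiding $x_i$, and the number of pairs from $G_2$ equals the number of Type~$T_2$ servers avoiding $x_i$.

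Next I would observe that, for the fixed $x_i$, every one of the $m$ servers lies in exactly one of the sets above: a Type~$T_1$ server either contains $x_i$ or not (matched in $G_1$ in the latter case); a Type~$T_2$ server contains $x_i$ as a singleton, or has $x_i$ in its $(t+1)$-term sum (matched in $G_1$), or avoids $x_i$ (matched in $G_2$); and a Type~$T_3$ server has $x_i$ as a singleton or in its sum (matched in $G_2$ in the latter case). Writing $k$ for the number of recovering sets produced, the singleton sets use one server each and the matched sets use two, so $k=m-P$ with $P$ the total number of matched pairs; by symmetry the same $k$ works for every $i$, so the array code has the $k$-PIR property and $g(3,t)\ge k/m$.

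It then remains to simplify $k/m=1-P/m$. Here $P$ collapses to $\binom{3t-1}{t-1}\binom{2t+1}{t+1}$, the expression for $m$ simplifies so that $m\big/\bigl(\binom{3t-1}{t-1}\binom{2t+1}{t+1}\bigr)=(24t^2+15t+3)\big/\bigl(2(2t+1)^2\bigr)$, and substituting yields
\[
\frac{k}{m}=1-\frac{8t^2+8t+2}{24t^2+15t+3}=\frac{16t^2+7t+1}{24t^2+15t+3},
\]
as claimed. The only step that needs real care is the regularity and equal-cardinality check for $G_1$ and $G_2$ (it is exactly why the multiplicities $\eta_r$ are chosen as they are); everything after that is bookkeeping with binomial coefficients. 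As a sanity check, at $t=1$ the bound reads $g(3,1)\ge 4/7$, agreeing with \Tref{thm:t=1}, and it tends to $2/3$ as $t\to\infty$, agreeing with $g(3)\le 2/3$ from \Tref{thm:upper_bound}.
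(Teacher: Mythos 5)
Your proposal is correct and follows essentially the same route as the paper: count singleton occurrences of $x_i$, form the two bipartite graphs $G_1$ and $G_2$, apply \Cref{cor:Hall} to extract perfect matchings, and simplify $k/m=1-P/m$. Note that your balance condition forces $\eta_3=\binom{2t}{t-1}$ (the factor $8$ in the paper's stated $\eta_3=8\binom{2t}{t-1}$ appears to be a typo, since only your value makes $G_2$ regular and yields the claimed rate), and your arithmetic, including the sanity checks at $t=1$ and $t\to\infty$, checks out.
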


The rate of the construction for each pair $(\omegaR,t)$, which is a lower bound on
$g(\omegaR,t)$, is given in the next theorem.
\begin{theorem}
For any integers $\omegaR$ and $t$ greater than one, the rate of the code by the construction
is $\frac{\beta + \gamma}{\beta + 2 \gamma}$, where
$$
\beta = \prod_{\ell =1}^{\omegaR-1} (\ell t +1) + (t-1)\sum_{r=2}^\omegaR \frac{(\omegaR-1)!}{(\omegaR-r)!} t^{r-2} \prod_{\ell =r}^{\omegaR-1} (\ell t+1)~,
$$
$$
\gamma = \sum_{r=1}^{\omegaR-1} \frac{(\omegaR -1)!}{(\omegaR-1-r)!} t^{r-1} \prod_{\ell =r}^{\omegaR-1} (\ell t+1)~.
$$
Moreover, when $t \rightarrow \infty$ the rate meets the upper bound of Theorem~\ref{thm:upper_bound}, i.e. $(\omegaR +1)/(2 \omegaR )$.
\end{theorem}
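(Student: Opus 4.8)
The plan is to follow the recipe sketched just before the theorem: fix the multiplicities $\eta_r$ so that all the auxiliary bipartite graphs $G_r$ become regular, read off $k$ and $m$ from the resulting matchings, and then simplify the binomial expressions. Throughout I fix one part, say $x_1$; by the symmetry of the construction under permutations of $x_1,\dots,x_p$ the analysis of the recovery sets of $x_1$ applies verbatim to every $x_i$, so the resulting $k$ is exactly the value for which the code has the $k$-PIR property.

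First I would pin down the $\eta_r$. For $r\in\{1,\dots,\omegaR-1\}$ the graph $G_r$ has first part $V_1^{(r)}$ consisting of the type-T$_r$ servers that do not mention $x_1$ at all, and second part $V_2^{(r)}$ consisting of the type-T$_{r+1}$ servers having $x_1$ inside their sum cell. A direct count shows that all vertices of $V_1^{(r)}$ have a common degree $a_r$ and all vertices of $V_2^{(r)}$ have a common degree $b_r$, each an explicit product of binomial coefficients times $\eta_{r+1}$, respectively $\eta_r$ (with a slightly different expression when $r=1$, since a type-T$_1$ server carries no distinguished ``sum'' cell). Since $a_r|V_1^{(r)}|=b_r|V_2^{(r)}|=|E(G_r)|$ automatically, imposing $a_r=b_r$ both makes $G_r$ regular and forces $|V_1^{(r)}|=|V_2^{(r)}|$; this is a single-step recurrence for $\eta_{r+1}/\eta_r$, and clearing denominators (its overall scale is irrelevant to $k/m$) produces positive integers $\eta_1,\dots,\eta_{\omegaR}$. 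For $\omegaR=3$ this normalisation reproduces $\eta_1=\binom{2t-1}{t-1}$, $\eta_2=1$, $\eta_3=\binom{2t}{t-1}$. By Corollary~\ref{cor:Hall} each $G_r$ then has a perfect matching.

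Next I read off $m$ and $k$. The sets that recover $x_1$ are: every server holding $x_1$ in a singleton cell, taken by itself; and every edge $\{v,u\}$ of the matching of each $G_r$, whose two servers jointly recover $x_1$ because the parts stored on $v$ together with the sum stored on $u$ differ exactly by $x_1$. These are pairwise disjoint, and they use up every server: a server either stores $x_1$ as a singleton, or has $x_1$ in its sum (so lies in some $V_2^{(r)}$ and is matched), or mentions $x_1$ nowhere (so lies in some $V_1^{(r)}$ with $r\le\omegaR-1$ -- note that a type-T$_{\omegaR}$ server involves all $p=\omegaR t$ parts -- and is matched). Writing $S$ for the number of servers holding $x_1$ as a singleton and $P$ for the total number of matched pairs, we get $m=S+2P$, and the code has the $(S{+}P)$-PIR property; moreover $S{+}P$ is optimal for this code, since in any family of disjoint recovering sets of $x_1$ at most $S$ can have size $1$ and the rest have size $\ge 2$, so $k\le S+(m-S)/2=S+P$. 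Writing $S$, $P$, $m$ explicitly as sums of binomial-coefficient products and the $\eta_r$, then substituting the solved $\eta_r$ and telescoping the binomials (via identities such as $\binom{N-1}{t-1}/\binom{N-1}{t}=t/(N-t)$ and $\binom{M}{a+1}/\binom{M}{a}=(M-a)/(a+1)$), one finds a common positive scalar $C$ with $S=C\beta$, $P=C\gamma$ and $m=C(\beta+2\gamma)$, whence the rate equals $k/m=(S+P)/(S+2P)=(\beta+\gamma)/(\beta+2\gamma)$. For the asymptotic statement, counting singleton cells in two ways gives $\omegaR t\,S=(t-1)m+N_1$, where $N_1$ is the number of type-T$_1$ servers, so $S/m=\tfrac1{\omegaR}\cdot\tfrac{t-1+N_1/m}{t}\to\tfrac1{\omegaR}$ as $t\to\infty$ (since $0\le N_1/m\le 1$); as $k/m=(m+S)/(2m)=\tfrac12+\tfrac{S}{2m}$, the rate tends to $\tfrac12+\tfrac1{2\omegaR}=(\omegaR+1)/(2\omegaR)$, the bound of Theorem~\ref{thm:upper_bound}.

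The main obstacle is the penultimate step: verifying that the three binomial sums for $S$, $P$ and $m$ collapse to $C\beta$, $C\gamma$ and $C(\beta+2\gamma)$ with the same $C$. I would first solve the $\eta_r$ recurrence in closed form, sanity-check the cases $\omegaR=2$ and $\omegaR=3$ against Corollary~\ref{cor:1<s<2}(iii) and Theorem~\ref{thm:s=3} to fix the bookkeeping, and then push the general identity through by induction on the number of terms in each sum. The regularity argument, the disjointness and exhaustion of the recovery sets, and the $t\to\infty$ limit are all routine once the $\eta_r$ are in hand.
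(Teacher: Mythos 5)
Your proposal is correct and follows essentially the same strategy the paper outlines for this construction (the paper defers the detailed proof to its full arXiv version): choose the $\eta_r$ so that each bipartite graph $G_r$ is regular, which forces $|V_1^{(r)}|=|V_2^{(r)}|$ and yields perfect matchings via Corollary~\ref{cor:Hall}; observe that the singleton-holders and matched pairs partition all $m$ servers so that $m=S+2P$ and $k=S+P$; and your double-counting argument for the $t\to\infty$ limit is sound. The one step you leave as a plan---the closed-form verification that $S:P=\beta:\gamma$---is the mechanical heart of the stated formula, but your reduction of the theorem to that identity is correct, and your bookkeeping checks out where it can be tested (e.g.\ for $\omegaR=3$ regularity forces $\eta_3=\binom{2t}{t-1}$, which reproduces the paper's rate $(16t^2+7t+1)/(24t^2+15t+3)$; the factor $8$ in the paper's displayed $\eta_3=8\binom{2t}{t-1}$ appears to be a typo).
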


A careful analysis shows that the rate of this construction is larger
from the rates of the previous constructions when $\omegaR > 2$ (see~\cite{BlEt16}).

If $\omegaR$ is not an integer, then the construction is very similar. We note that there is some flexibility
in choosing the number of parts in each type (there is no such flexibility when~$\omegaR$ is an integer).
But we have to use the same types of servers as in the case
when~$\omegaR$ is an integer, except for the last type.
For example, consider the case when $t=3$ and $\omegaR=7/3$, so $p=7$.
There are three types of servers:

In Type T$_1$, each server has 3 singletons. There are 15 combinations
in which $x_i$ is a singleton and 20 combinations in which $x_i$ is not
a singleton. Each combination will appear in three servers of Type T$_1$, so~$\eta_1=3$.

In Type T$_2$, each server has 2 singletons and one cell with a sum
of four parts. There are 30 combinations in which $x_i$ is a singleton,
60 combinations in which $x_i$ is in a sum of four parts, and
15 combinations in which $x_i$ is neither a singleton nor in
a sum of four parts. Each combination will appear in exactly one server of Type~T$_2$, so $\eta_2=1$.

In Type T$_3$, each server has 3 singletons and one cell with a sum
of seven parts. Hence, each part appears in each server either as a singleton or in a sum
of seven parts. There are 6 combinations in which $x_i$ is a singleton,
and 15 combinations in which $x_i$ is in a sum of seven parts.
Each combination will appear in exactly one server, so $\eta_3=1$.

Now, the two bipartite graphs are formed and Corollary~\ref{cor:Hall} is applied to find the
pairs from which $x_i$ can be recovered. The rate of the resulting code is $\frac{52}{77}$
which is better than the $\frac{23}{35}$ rate implied by Theorem~\ref{thm:s=rational}. The rates
for other parameters are also better and a general rate for $w=7/3$ is given by:
\begin{theorem}
\label{thm:s=7_3}
$$
g(7/3,3t) \geq \frac{160 t^2 +45 t +3}{224t^2 +81t +7}~.
$$
\end{theorem}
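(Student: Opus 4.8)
The plan is to specialise the construction of Section~\ref{sec:Asingletons} to the value $7/3$ and to compute its PIR rate. Since $p$ is the product of $7/3$ with the number of cells in a server and must be an integer, that number is a multiple of $3$; write it as $3t$, so $p=7t$. As in that section there are three types of servers, each cell holding either a singleton or one large sum: Type~T$_1$ servers store $3t$ singletons; Type~T$_2$ servers store $3t-1$ singletons together with one cell holding the sum of $3t+1$ of the remaining parts; and, because $7/3$ is not an integer, Type~T$_3$ is the modified last type, whose servers store $3t-1$ singletons together with one cell holding the sum of all $4t+1$ parts not appearing there as singletons. Each possible part/sum pattern of Type~T$_r$ is used $\eta_r$ times, for positive integers $\eta_1,\eta_2,\eta_3$ to be fixed below, so the total number of servers is $m=\eta_1\binom{7t}{3t}+\eta_2\binom{7t}{3t-1}\binom{4t+1}{3t+1}+\eta_3\binom{7t}{3t-1}$.

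Next, for a fixed part $x_i$, I would produce many pairwise disjoint sets of servers each recovering $x_i$. Every server storing $x_i$ as a singleton is, by itself, such a set; the remaining sets are pairs obtained from two auxiliary bipartite graphs, exactly as in the proof of Theorem~\ref{thm:1<s<2}. In the graph $G_1$ one side consists of the Type~T$_1$ servers in which $x_i$ does not occur, the other of the Type~T$_2$ servers in which $x_i$ occurs inside the large sum, and an edge is placed whenever the parts stored by the two servers are arranged so that together the two servers determine $x_i$ (here this forces the $3t$ singletons of the T$_1$ server to be exactly the summands, other than $x_i$, of the T$_2$ server). The graph $G_2$ is analogous, with the T$_1$ and T$_2$ roles replaced by the Type~T$_2$ servers avoiding $x_i$ and the Type~T$_3$ servers having $x_i$ in their sum. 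A direct degree count shows each $G_r$ is biregular, and one checks that $\eta_1,\eta_2,\eta_3$ can be chosen (uniquely up to a common integer multiple, which does not affect $k/m$) so that the two vertex classes of each $G_r$ have equal size; the forced ratios are $\eta_3/\eta_2=\binom{4t}{3t+1}$ and $\eta_1/\eta_2=\tfrac{3}{4}\binom{4t}{3t}$. Then each $G_r$ is a regular bipartite graph, hence has a perfect matching by Corollary~\ref{cor:Hall}, and the matching edges give the required disjoint pairs.

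The decisive observation is that for every part $x_i$, each of the $m$ servers lies in exactly one of the recovery sets just built: a Type~T$_1$ server either stores $x_i$ as a singleton or is matched in $G_1$; a Type~T$_2$ server stores $x_i$ as a singleton, or is matched on the T$_2$ side of $G_1$ (when $x_i$ lies in its sum) or on the T$_2$ side of $G_2$ (when $x_i$ does not appear in it at all); and a Type~T$_3$ server stores $x_i$ as a singleton or is matched on the T$_3$ side of $G_2$. Consequently the construction has the $k$-PIR property with $k=m-b$, where $b$ is the number of matched pairs, namely $b=\eta_1\binom{7t-1}{3t}+\eta_2\binom{7t-1}{3t-1}\binom{4t}{3t+1}$. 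Substituting the chosen $\eta_r$ and simplifying $k/m$ by repeated use of the identities $\binom{n-1}{k}=\tfrac{n-k}{n}\binom{n}{k}$, $\binom{n}{k-1}=\tfrac{k}{n-k+1}\binom{n}{k}$, and Pascal's rule causes all binomial coefficients to cancel and leaves $k/m$ equal to the claimed rational function of $t$; this yields $g(7/3,3t)\geq k/m$ (and, as $t\to\infty$, it meets the bound $(7/3+1)/(14/3)=5/7$ of Theorem~\ref{thm:upper_bound}).

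I expect the main obstacle to be the middle step: verifying that the two bipartite graphs are genuinely regular once the multiplicities $\eta_r$ are inserted. This requires computing the degree of a vertex on each side of each $G_r$ and confirming it is constant, and — more importantly — establishing the size equalities $|V_1(G_r)|=|V_2(G_r)|$ for $r=1,2$, which is precisely where the binomial-coefficient identities enter and which in turn pins down the $\eta_r$. Everything afterwards (invoking Hall's theorem through Corollary~\ref{cor:Hall}, and collapsing the expression for $k/m$) is routine calculation.
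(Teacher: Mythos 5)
Your proposal is exactly the paper's intended route: specialise the Section~\ref{sec:Asingletons} construction to $\omegaR=7/3$ with $3t$ cells and $p=7t$, take the three types T$_1$, T$_2$, T$_3$ with the last type's sum truncated to the $4t+1$ parts not held as singletons, balance the two bipartite graphs by choosing the multiplicities (your ratios $\eta_1/\eta_2=\tfrac34\binom{4t}{3t}$ and $\eta_3/\eta_2=\binom{4t}{3t+1}$ are correct), and apply Corollary~\ref{cor:Hall}. One caveat: you assert, without doing it, that the final simplification ``leaves $k/m$ equal to the claimed rational function.'' It does not, quite. With $\eta_2=1$ one gets $m=\binom{4t}{t}\binom{7t}{3t}\bigl(\tfrac34+\tfrac{3t}{3t+1}+\tfrac{3t^2}{(3t+1)(4t+1)}\bigr)$ and $b=\binom{4t}{t}\binom{7t}{3t}\cdot\tfrac{3(4t+1)}{7(3t+1)}$, whence $k/m=(160t^2+45t+3)/(224t^2+77t+7)$, with $77t$ rather than the printed $81t$ in the denominator. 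At $t=1$ your construction gives $52/77$, agreeing with the paper's own worked example for three cells, whereas the printed formula gives $208/312=2/3$; so the theorem's denominator is evidently a typo. Since your value exceeds the printed one for all $t\geq 1$, your argument proves the stated inequality a fortiori, but you should carry the computation through and state what it actually yields. A second, minor point to watch in the ``direct degree count'': the edge condition in $G_2$ is not the exact set-equality used in $G_1$ --- the T$_3$ sum minus $x_i$ has $4t$ parts while a T$_2$ server involves $6t$ parts, so the condition is that the T$_2$ server's $3t+1$ summands lie inside the T$_3$ sum and the remaining $t-1$ summands of the T$_3$ server are singletons of the T$_2$ server; the graph is still biregular (both degrees equal $\eta_3\binom{3t-1}{2t}$ after balancing), so nothing breaks, but the asymmetry must be handled explicitly.
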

%%%%%%%%%%%%%%%%%%%%%%%%%%%%%%%%%%
%%%
%%%       CONCLUSION
%%%
%%%%%%%%%%%%%%%%%%%%%%%%%%%%%%%%%%%%

%\section{Conclusions and Open Problems}
%\label{sec:conclude}
%
%We have constructed $k$-PIR array codes with good PIR rate for all possible pairs $(s,t)$,
%where a database is divided into $st$ parts, and each server stores $t$ linear
%combinations of parts in its cells. We have also proved upper bounds on
%the PIR rate of a PIR array code. These results are strong enough to determine
%the best rate of a PIR array code when $1<s\leq 2$ for all sensible choices
%of $t$. Moreover, the results determine the asymptotic value for the PIR rate
%for all rational values of $s$, when $t$ is allowed to tend to infinity.
%The research on PIR array codes is far from being complete.
%Some problems, in the direction taken in this paper are outlined in~\cite{BlEt16}.
\vspace{-0.07cm}

\section*{Acknowledgment}
\vspace{-0.07cm}
This work was supported in part by the EPSRC Grant EP/N022114/1.
\vspace{-0.14cm}
%\bibliography{allbib,extra}

\end{document}